\newtheorem{theorem}{Theorem}
\newtheorem{lemma}[theorem]{Lemma}
\newtheorem{definition}[theorem]{Definition}
\newtheorem{claim}[theorem]{Claim}
\newcommand{\nmin}{\ensuremath{\widetilde{N}}}
\begin{document}

\begin{titlepage}

  \title{The Computational Power of Beeps}
  \author{Seth Gilbert\thanks{Supported in part by NUS FRC T1-251RES1404}\\ National University of Singapore\\ {\tt seth.gilbert@comp.nus.edu.sg}  \and Calvin Newport\thanks{Supported in part by NSF grant CCF 1320279}\\ Georgetown University\\ {\tt cnewport@cs.georgetown.edu}}
  \date{}

  \maketitle

\begin{abstract}
In this paper, we study the quantity of computational resources (state machine states and/or probabilistic transition precision)
needed to solve specific problems 
in a single hop network where nodes communicate using only beeps.
We begin by focusing on randomized leader election.
We prove a lower bound on the states required to solve this problem with a given error bound, probability precision, and  (when relevant) network size lower bound.
We then show the bound tight with a matching upper bound.
Noting that our optimal upper bound is slow,
we describe two faster algorithms that trade some state optimality to gain efficiency.
We then turn our attention to more general classes of problems
by proving that
 once you have enough states to solve leader election with a given error bound,
you have (within constant factors) enough states to simulate correctly, with this same error bound, a logspace TM with a constant
number of unary input tapes: allowing you to solve a large and expressive set of problems.
These results identify a key simplicity threshold beyond which useful distributed computation is possible
in the beeping model.
\end{abstract}

\setcounter{page}{0}
\thispagestyle{empty}

\end{titlepage}


\section{Introduction}
\label{sec:intro}


The beeping model of network communication~\cite{CornejoK10,Afek14012011,AfekABCHK11,ScottJ013,AfekABCHK13,ForsterSW14} 
assumes a collection of computational {\em nodes},  connected in a network, that 
interact by {\em beeping} in synchronous rounds. 
If a node decides to beep in a given round, it receives no feedback from the channel.
On the other hand, if a node decides to listen, it is able to differentiate
between the following two cases: (1) no neighbor in the network topology beeped in this round, and (2) one or more neighbors beeped.

Existing work on this model provide two motivations.
The first concerns digital communication networks  (e.g.,~\cite{degesys:2007,CornejoK10}). 
Standard network communication (in which nodes interact using error-corrected packets
containing many bits of information) requires substantial time, energy, and computational overhead (at multiple stack layers)
to handle the necessary packet encoding, modulation, demodulation, and decoding. 
Beeps, on the other hand, provide an abstraction capturing the simplest possible communication primitive: a detectable burst
of energy.
In theory, beep layers could be implemented using a fraction of the complexity required by standard packet communication, 
establishing the possibility of {\em micro-network} stacks for settings where high speed and low cost are crucial.
The second motivation for the beeping model concerns a connection to biological systems (e.g.,~\cite{navlakha:2014,Afek14012011,ScottJ013}). 
Network communication in nature is often quite simple; e.g., noticing a flash of light 
from nearby fireflies or detecting a chemical marker diffused by nearby cells. Therefore, understanding how to achieve distributed
coordination using such basic primitives can provide insight into how such coordination arises in nature (see~\cite{navlakha:2014} for a recent survey of this approach).

\paragraph{A Key Question.}  
As detailed below, existing work on the beeping model seeks to solve useful problems as {\em efficiently} as possible in this primitive network setting.
In this paper, by contrast, we focus on solving useful problems as {\em simply} as possible (e.g., as measured by factors such as the size of the algorithm's state machine representation),
asking the key question: is it possible to solve problems with both simple communication {\em and} simple algorithms?
Notice, the answer is not {\em a priori} obvious. It might be the case, for example, that complexity is conserved,
so that simplifying the communication model requires more complex algorithms.
Or it might be the case that simple algorithms coordinating with beeps are sufficient for even complex tasks.
Given the above motivations for studying beeps, answering this question is crucial, as it will help us probe the feasibility of
useful networked systems---be them constructed by engineers or evolution---that are truly simple in both their communication methods and control logic.

%
%

\paragraph{Our Answers.}
We study a collection of $n$ nodes connected in a {\em single hop} topology (i.e., the network graph is a clique).
We model  the randomized algorithmic process executing on each node as a probabilistic state machine.
The two parameters describing the complexity of these algorithms are: (1) an upper bound on the number of states (indicated by integer $s \geq 1$);
and (2) an upper bound on the precision of the probabilistic transitions (indicated by integer $q \geq 2$, where we allow probabilistic
transitions to be labeled with probability $0$, $1$, or any value in the interval $[\frac{1}{q}, 1- \frac{1}{q}]$).
We ask how large these values must grow to solve specific problems. Our motivating premise is that smaller values imply simpler algorithms.
(Notice, by considering both $s$ and $q$, we can capture the trade-off between memory and probabilistic precision; a question of 
standalone interest; c.f.,~\cite{lenzen:2014}).

We begin by considering {\em leader election},
a fundamental primitive in distributed systems.
We prove that for a given error bound $\epsilon \in [0,1/2]$ and probabilistic precision $q$,
any algorithm that guarantees to solve leader election with probability $1-\epsilon$
requires $s=\Omega(\log_q{(1/\epsilon)})$ states.
Provided a lower bound $\nmin$ on the size of the network,
this lower bound {\em reduces} to $s=\Omega(\log_q{(1/\epsilon)}/\nmin)$ states.
That is, the more nodes in the network, the fewer states each node needs to solve the problem.

This lower bound leverages a reduction argument.
We begin by defining and lower bounding a helper problem called {\em $(1,k)$-loneliness detection},
which requires an algorithm to differentiate between $n=1$ and $n\geq k$
(but has no requirements for intermediate network sizes).
This bound uses an indistinguishability argument regarding how nodes move through a specified state sequence.
We then show how to transform a solution to leader election for size lower bound $\nmin$,
to solve $(1,\nmin)$-loneliness detection---allowing our loneliness bound to carry over to leader election.

We then turn our attention to leader election upper bounds.
We begin by proving our lower bound tight by showing, for every network size lower bound $\nmin \geq 1$,
how to solve leader election with $s=O(\log_q{(1/\epsilon)}/\nmin)$ states.
The key idea behind this algorithm is to have nodes work together to implement a distributed timer.
The more nodes in the network, the longer the distributed timer runs, 
and the longer the distributed timer runs, the higher the probability that we succeed at leader election.
In this way, increasing the network size reduces the states required to hit a specific error bound.
A shortcoming of this new algorithm, however, is that its expected running time is exponential in the network size.
With this mind, we then describe two faster algorithms (their time is polylogarithmic in the relevant parameters)
that require only the minimum precision of $q=2$.
The cost for their efficiency, however, is a loss of state optimality in some circumstances.

The first algorithm requires $s=O(\log{(1/\epsilon)})$ states and solves leader
election with probability at least $1-\epsilon$, for any network size $n$.
It terminates in $O(\log{(n + 1/\epsilon)}\log{(1/\epsilon}))$ rounds, with probability at least $1-\epsilon$.
The key idea behind this algorithm is to test a potentially successful election by having the potential
leader(s) broadcast with probability $1/2$ for $\log{(1/\epsilon})$ rounds, looking for evidence
of company. It is straightforward to see that a single such test fails
with probability no more than $(1/2)^{\log{(1/\epsilon)}} = \epsilon$.
The problem, however, is that as the network size grows, the number of such tests performed also increases,
making it more likely that one fails. We neutralize this problem in our analysis by showing that 
the test failure probabilities fall away as a geometric series in the test count---bounding the cumulative error sum as the network grows.

The second algorithm requires only $s=O(1)$ states,
and yet, for every network size $n$,
it solves leader election with high probability in $n$ when run in a network of that size.
It requires only $O(\log^2{n})$ rounds, with high probability.  
%
The key idea driving this algorithm is to harness the large amount
of total states in the network to implement a distributed timer that requires
$\Theta(\log{n})$ time to countdown to $0$, when executed among $n$ nodes.
This duration is sufficient for the nodes to safely reduce contention down to a single leader.

After studying leader election,
we turn our attention to more general classes of distributed decision problems.
Leveraging our leader election algorithms as a key primitive,
we show how to simulate a logspace decider Turing Machine (TM)
with a constant number of unary inputs (all defined with respect to the network size $n$). 
Perhaps surprisingly, this algorithm requires only $O(\log{(1/\epsilon)})$ states to complete
the simulation with probability $1-\epsilon$, and only $O(1)$ states to achieve high probability in $n$.
(Notice that this is not enough states for an individual node to store even a single
pointer to the tape of the simulated machine.)
Our simulation
uses the same general strategy first highlighted in the study of population protocols~\cite{AngluinADFP06}: simulate a counter machine 
with a constant number of counters that hold values from $0$ to $O(n)$,
and then apply a transformation due to Minsky~\cite{minsky1967} to simulate
a logspace TM with this machine.
Due to the differences between the  beeping and population protocol models,
however, our counter machine simulation strategies are distinct from~\cite{AngluinADFP06}.

\paragraph{Implications.}
The results summarized above establish that the $\log{(1/\epsilon)}$ state threshold for leader election with bounded error
is (in some sense) a fundamental simplicity threshold for solving useful problems with beeps.
It is striking that
if you have {\em slightly less} than this much memory, even the basic symmetry breaking task of leader election is impossible,
but
if you instead have {\em slightly more}, then suddenly you can solve large classes of complicated problems (i.e., everything solvable by a logspace TM).
If you are satisfied with high probability solutions (which is often the case),
then this treshhold reduces even more all the way down to $O(1)$.
Given these results, we tentatively claim a positive answer to the key question posed above:  {\em complexity is not destiny;
you can solve hard problems simply in simple network models.}

\bigskip

\noindent Before proceeding into the technical details of our paper, we will first take the time to place both our model and our
results in the context of the several different areas of relevant related work. Among other questions, we want to understand
the relationship of our bounds to existing beep results, and how the beeping model compares and contrasts
to similar settings.

\paragraph{Comparison to Existing Beep Results.}
The algorithmic study of beeping networks
began with Degesys et~al.~\cite{degesys:2007}, who
introduced a continuous variant of the beeping model, inspired by the pulse-coupled oscillator framework.
They studied biologically inspired strategies for solving a {\em desynchronization} problem.  
Follow-up work generalized the results to multihop networks~\cite{degesys:2008,MotskinRSG09}.
Cornejo and Kuhn~\cite{CornejoK10} introduced the discrete (i.e., round-based) beeping model studied in this paper.
They motivated this model by noting the continuous model in~\cite{degesys:2007,degesys:2008,MotskinRSG09} was unrealistic and yielded trivial solutions to desynchronization,
they then demonstrated how to solve desynchronization without these assumptions.
Around this same time, Afek et~al.~\cite{Afek14012011} described a maximal independent set (MIS) algorithm in a strong
version of the discrete beeping model.
They argued that something like this algorithm might play a role in the
proper distribution of
sensory organ precursor cells in fruit fly nervous system development.
Follow-up work~\cite{AfekABCHK11,ScottJ013,AfekABCHK13} removed some of the stronger assumptions of~\cite{Afek14012011} and improved the time complexity.
In recent work,  F{\"{o}}rster et~al.~\cite{ForsterSW14} considered deterministic leader election in a multihop beeping network.

To place this paper in this context of the existing work on the beeping model, it is important to note that the above-cited
papers focus primarily on two goals: minimizing time complexity and minimizing information provided to nodes  (e.g., network size, max degree, global round counter).
They do not, however, place restrictions on the amount of states used by their algorithms. 
Accordingly, these existing results require either:
the ability to store values as large as $\Theta(n)$~\cite{CornejoK10,Afek14012011,AfekABCHK11,ScottJ013,AfekABCHK13},
 or uniques ids~\cite{ForsterSW14} (which in our framework would require a machine with $n$ different initial states,
or equivalently, $n$ different machines).
In this paper, 
we prove that the algorithmic complexity threshold for solving many useful problems is actually much lower:
$O(1)$ states are sufficient for high probability results and $O(\log{(1/\epsilon)})$ states are sufficient for fixed error bound results.\footnote{Notice, direct comparisons between many of these results is complicated by the variety of possible assumptions; e.g., synchronous versus asynchronous starts,
multihop versus single hop, small versus large probability precision.}
We argue the direction pursued in this paper (how complex must algorithms become to solve useful problems with beeps)
complements the direction pursued in existing papers (how fast can algorithms solve useful problems with beeps).
Answers to both types of queries is necessary to continue to understand the important topic of coordination in constrained network environments.

\paragraph{Comparison to the Radio Network Model.}
The standard radio network model allows nodes to send large messages,
but assumes concurrent transmissions lead to message loss (that may or may not be detectable).
The key difference between the radio network model and the beeping model
is that in the former you can recognize the case where exactly one node broadcast (e.g., because you receive a message). 
This capability, which the beeping model does not offer (a single beeper looks like multiple beepers),
is powerful. It allows, for example, algorithms that can solve leader election with deterministic safety
using only a constant amount of state,
when run in network of size at least $2$.
If you assume receiver collision detection, these solutions require only polylogarithmic expected time.\footnote{For example: 
divide rounds into pairs of even and odd rounds.
In even rounds, nodes broadcast a simple message with constant probability. If a node ever succeeds in broadcasting
alone, all other nodes become {\em heralds}. They stop competing in even rounds and begin competing in odd rounds.
When the winner (who is now the only non-herald in the network) eventually hears a message in an odd round,
it elects itself leader. If we assume collision detection, we can reduce contention fast in the even  rounds
with basic knockout protocols; e.g., if you choose to listen and detect a collision you are knocked out and just wait to become a herald.}
These results violate our lower bounds for leader election with beeps (where the state size grows toward infinity as you drive the error bound toward $0$)---indicating that the communication limitations
in the beeping model matter from a computability perspective.

\paragraph{Comparison to the Stone Age Computing Model.}
It is also important to place our results in the context of other simplified communication/computation models.
Consider, for example, the stone age distributed computing model introduced by Emek and Wattenhofer~\cite{Emek:2013}.
This model assumes state machines of constant size connected in a network and executing asynchronously.
The machines communicate with a constant-size message alphabet and when transitioning 
can distinguish between having received $0$, $1$, or $\geq b$ messages of each type,
for some constant parameter $b \geq 1$.
For $b=1$, this model is essentially an asynchronous version of the beeping model.
To this end, nodes in our model can simulate nodes in the stone age model with $b=1$ indefinitely using a constant amount of states.
For $b>1$, however, any such simulation likely becomes impossible in the beeping model with a constant amount of states.
As noted in our discussion of the radio network model, the ability to safely recognize the case of exactly one message being sent provides
extra power beyond what is achievable (without error) using only beeps. 

\paragraph{Comparison to the Population Protocol Model.}
Another relevant simplified communication/computation setting is the well-studied population protocol 
model~\cite{AngluinADFP06,AngluinAE08a,AngluinAE08,AngluinAER07,AngluinAE06,ChatzigiannakisS08} .
This model describes nodes as state machines of constant size that interact in a pairwise manner---transforming both states asymmetrically.
In the basic version of the model, a fair scheduler chooses pairs to interact. A version in which the scheduler is randomized
adds more power. There are similarities in the goals pursued by the beeping and population protocol models:
both seek (among other things) to understand the limits of limited state in distributed computation.
The core difference between the two settings is the role of the algorithm in communication scheduling.
In the beeping model, algorithms must reduce contention and schedule communication on their own.
In the population protocol model the scheduler ensures fair and reliable interactions.
Imagine, for example, a continuous leader election problem where every node has a {\em leader} bit,
and the problem
requires in an infinite execution that: (1) every node sets {\em leader} to $1$ an infinite number of times;
and (2) there is never a time at which two nodes both have {\em leader} set to $1$.
This problem is trivial in the population protocol: simply pass a leader token around the network.
In the beeping model, by contrast, it is impossible as it essentially 
requires nodes to solve leader election
correctly an infinite number of times---a feat which would require an unachievable error bound of $0$.
It follows that in some respects these two models are studying the impact of limited state on different
aspects of distributed computation.

\section{Model}

We model a collection of $n$ probabilistic computational agents (i.e., ``nodes") that
are connected in a single hop network and
communicate using a unary primitive; i.e., {\em beeps}.
They execute in synchronous rounds. In each round, each node can either beep or receive.
Receiving nodes can distinguish between the following two cases: 
(1) no node beeped; (2) one or more nodes beeped.
We characterize these agents by $s$ (a bound on the number of states in their state machine definition),
and $q$ (a bound on the precision allowed in probabilistic transitions, with larger values enabling more accurate transition probabilities).
We now  formalize these model definitions and assumptions.

\paragraph{Node Definition.}
We formalize the algorithm executing on each node as a probabilistic
state machine $M = (Q_r, Q_b,q_s,\delta_{\bot}, \delta_{\top})$,
where: $Q_r$ and $Q_b$ are two disjoint sets of states corresponding to receiving and beeping, respectively;
$q_s$ is the start state; and $\delta_{\bot}$ and $\delta_{\top}$ 
are the probabilistic transition functions\footnote{These transition functions map the current state to a distribution
over the states to enter next.} applied in the
case where the node detects silence and where the node beeps/detects a beep, 
respectively. 

Some problems have all nodes execute the same state machine,
while others include multiple machine types in the system, each corresponding to a different initial value.

\paragraph{Executions.}
Executions proceed in synchronous rounds with all nodes in the network starting in their machine's start state.
At the beginning of each round $r$, 
for each node $u$ running a machine $(Q_r, Q_b,q_s,\delta_{\bot}, \delta_{\top})$,
if its current state $q_u$ is in $Q_b$, then $u$ emits a beep, otherwise it receives.
If at least one node beeps in $r$, then it follows that {\em all} nodes either beep or detect a beep
in this round. Therefore, each node $u$ applies the transition function $\delta_{\top}$
to its current state $q_u$ and selects its next state according to the resulting distribution, $\delta_{\top}(q_u)$.
If no node beeps in $r$, then each node $u$ applies the transition function $\delta_{\bot}$,
selecting its next state from the distribution,  $\delta_{\bot}(q_u)$.

\paragraph{Parameters.}
We parameterize the state machines in our model with two values.
The first, indicated by $s \geq 1$, is an upper bound on the number of states allowed (i.e., $|Q_r| + |Q_b| \leq s$).
The second, indicated by $q \geq 2$, bounds the precision of the probabilistic transitions allowed by the $\delta$ functions.
In more detail, for a given $q$, the probabilities assigned to states by distributions in the range of $\delta$
must either be $0$, $1$, or in the interval, $[\frac{1}{q}, 1-\frac{1}{q}]$.
For the minimum value of $q=2$, for example, probabilistic transitions can only occur
with probability $1/2$.
As $q$ increases, smaller probabilities, as well as probabilities closer to $1$, become possible.
Finally, we parameterize a given execution with $n$---the number of nodes executing in the network.


\section{Leader Election}

The first computational task we consider is leader election: eventually, one node designates itself leader.  An algorithm state machine
that solves leader election must include a final {\em leader state} $q_{\ell}$ that is terminal (once a node enters the state, it never leaves). If a node enters this state it indicates the node has elected itself leader.
For a given error bound $\epsilon \in [0,1/2]$,
we say an algorithm {\em solves} leader election with respect to $\epsilon$ if when executed in a network of any size,
it satisfies the following two properties: (1) {\em liveness}: with probability $1$, at least one node eventually enters the leader state; and (2) {\em safety}:  with probability at least $1-\epsilon$, there is never more than $1$ node in the leader state.
We also consider algorithms for leader election that are designed for networks of some minimal size $\nmin$.
 In this case, the algorithm must guarantee liveness in every execution, 
 but it needs to guarantee safety only if the network size $n$ is at least $\nmin$.
Our goal is to develop algorithms that use a minimum number of states to solve leader election for a given error bound $\epsilon$,
 probability precision $q$, and, when relevant, network size minimum $\nmin$.

\paragraph{Roadmap.}
In Section~\ref{sec:leader:lower}, we present a lower bound for leader election.  In Section~\ref{sec:universal}, we present a universal algorithm template, followed by three specific instantiations in Sections~\ref{sec:leader:stateoptimal},~\ref{sec:fast}, and~\ref{sec:faster}.

\subsection{Leader Election Lower Bound}
\label{sec:leader:lower}

Here we analyze the number of states required to solve leader election given
a fixed $\epsilon$, $q$, and network size lower bound $\nmin$.
Our main result establishes that the number of states, $s$,
must be in $\Omega(\lceil \frac{\log_{q}{(1/\epsilon)}}{\nmin}\rceil)$.

To prove this result, we begin by defining and bounding a helper problem called {\em $(1,k)$-loneliness detection},
which requires an algorithm to safely distinguish between $n=1$ and $n \geq k$.
The bound leverages a probabilistic indistinguishability argument concerning a short
execution of the state machine in both the $n=1$ and $n=k$ cases.
We then show that loneliness detection captures a core challenge of leader
election by demonstrating how to transform a leader election algorithm
that works for $n \geq \nmin$ into a solution to $(1,\nmin)$-loneliness detection.
The bound for the latter then carries over to leader election by reduction. 

\paragraph{$(1,k)$-Loneliness Detection.}
The $(1,k)$-loneliness detection problem is defined for some integer
$k>1$ and error bound $\epsilon$. It assumes all nodes run the same state machine with two
special terminal final states that we label $q_a$ (indicating ``I am alone")
and $q_c$ (indicating ``I am in a crowd").
The {\em liveness} property of this problem requires
that  with probability $1$,
every node  eventually enters a final state.
The {\em safety} property requires that with probability at least $1-\epsilon$, the following holds:
if $n=1$, then the single node in the system eventually enters $q_a$; and if $n \geq k$
then all nodes eventually enter $q_c$.
Crucial to this problem definition is that we do not place any restrictions
on the final states nodes enter for the case where $1 < n < k$.

The following bound formalizes the intuition
that it becomes easier to break symmetry, and therefore easier
to solve loneliness detection, as the threshold for detecting
a crowd grows. Put another way:
the presence of a big crowd is easier to detect than a small crowd.

\begin{lemma}
Fix some integer $k>1$. Let ${\cal L}$ be an algorithm that solves
$(1,k)$-loneliness detection with error bound $\epsilon$ and probability precision $q$
using $s$ states.
It follows that $s = \Omega( \frac{\log_{q}{(1/\epsilon)}}{k})$.
\label{lem:lonely}
\end{lemma}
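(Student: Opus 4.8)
The plan is to prove the bound through a coupling (indistinguishability) argument between the $n=1$ execution and the $n=k$ execution, exploiting the fact that a group of nodes that all occupy the \emph{same} state at the start of a round produces exactly the same channel behavior as a lone node. First I would observe that in the $n=1$ case the execution of $\mathcal{L}$ is simply a Markov chain on the (at most $s$) states: from a state in $Q_b$ the node beeps, hears its own beep, and applies $\delta_{\top}$, while from a state in $Q_r$ it hears silence and applies $\delta_{\bot}$. Since $\epsilon \le 1/2$, the safety property for $n=1$ forces the lone node to reach $q_a$ with probability at least $1-\epsilon > 0$, so $q_a$ is reachable from the start state $q_s$ along edges of positive transition probability.

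Next I would extract a short witness path. From reachability there is a \emph{simple} path $p_0 = q_s, p_1, \dots, p_T = q_a$ in this chain with $T \le s-1$ (a shortest path repeats no state). Because every positive transition probability is either $1$ or lies in $[\frac{1}{q}, 1-\frac{1}{q}]$, each edge of the path carries probability at least $1/q$, so the lone node traverses exactly this path with probability $P = \prod_{i<T}\Pr[p_i \to p_{i+1}] \ge q^{-(s-1)}$. The crucial observation is the coupling: suppose that in the $n=k$ execution all $k$ nodes happen to remain in lockstep along this same path. Whenever they share a state $p_i \in Q_b$ they all beep (channel: beep; apply $\delta_{\top}$), and whenever they share a state $p_i \in Q_r$ they all listen while the channel stays silent (apply $\delta_{\bot}$) --- precisely the transition function used at $p_i$ in the single-node chain. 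Hence each node independently advances from $p_i$ to $p_{i+1}$ with the same probability as the lone node, and all $k$ stay in lockstep for the entire path with probability at least $P^k \ge q^{-(s-1)k}$.

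Finally I would derive the contradiction with safety. In this lockstep event all $k$ nodes are absorbed in the terminal state $q_a$, whereas the crowd-case safety property, which applies to the instance $n=k$ since $n \ge k$, demands that all nodes enter $q_c \ne q_a$ except with probability at most $\epsilon$. Therefore $P^k \le \epsilon$, and combining with $P \ge q^{-(s-1)}$ yields $q^{-(s-1)k} \le \epsilon$, i.e.\ $(s-1)k \ge \log_q(1/\epsilon)$ and hence $s = \Omega(\log_q(1/\epsilon)/k)$. I expect the main obstacle to be making the coupling step fully rigorous: one must verify that $k$ identically-positioned nodes are genuinely indistinguishable from a single node at the level of channel feedback --- so that the \emph{same} transition function is applied in every round and the independent per-node transitions can coincide --- and must keep the witness path short enough, via the simple-path length bound $T \le s-1$, so that the accumulated factor of $1/q$ per edge does not weaken the bound below what is claimed. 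The remaining algebra is routine.
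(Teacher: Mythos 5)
Your proposal is correct and follows essentially the same argument as the paper's own proof: the paper defines a ``solo reachable path,'' excises loops to bound its length by $s$, and shows by induction that all $k$ nodes follow it in lockstep with probability at least $(1/q)^{ks}$, which is exactly your coupling argument with the simple witness path. The only cosmetic difference is that the paper separately dispenses with the trivial case $q \geq 1/\epsilon$ before running the argument, which your derivation handles implicitly.
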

\begin{proof}
Fix some integer $k>1$, error bound $\epsilon$,
and probability precision $q$.
Fix some algorithm ${\cal L}$ that solves $(1,k)$-loneliness detection
with error probability $\epsilon$, using precision $q$.
Let $(Q_r, Q_b, q_s, \delta_{\bot}, \delta_{\top})$ be ${\cal L}$'s state machine description.
Let $q_{a}$ and $q_{c}$ be the two terminal final states
 required by the problem definition.
We note that when $q \geq (1/\epsilon)$, the lower bound claim on the state size
reduces to a constant (or smaller), which is trivially true.
The remainder of this proof therefore addresses the more interesting case where $q < (1/\epsilon)$. 
Our goal is prove the required lower bound on the number of states, $s$,
needed for ${\cal L}$ to solve loneliness detection given these fixed values of $k$, $\epsilon$, and $q$.

Our first step toward this goal is 
to introduce the notion of a {\em solo reachable path}, defined with respect to ${\cal L}$.
In more detail, we say a sequence $P = q_1,q_2,...,q_x$ of $x$ states from $Q_r \cup Q_b$
is a {\em solo reachable path} for ${\cal L}$'s state machine if and only if 
$q_1 = q_s$, and
for each consecutive pair of states $q_i,q_{i+1}$ in $P$, the following hold:

\begin{enumerate}
\item if $q_i \in Q_r$, then the probability assigned to $q_{i+1}$ in $\delta_{\bot}(q_i)$ is greater than $0$;

\item if $q_i \in Q_b$, then the probability assigned to $q_{i+1}$ in $\delta_{\top}(q_i)$ is greater than $0$.
%

\end{enumerate}

\noindent Put another way, a solo reachable path is a sequence of states that a node running
this machine might feasibly follow (i.e., it occurs with a non-zero probability)
 in a network with $n=1$. 

Fix any such solo reachable path $P=q_1,q_2,...,q_x$.
We will now consider the probability that a network consisting of exactly $k$
nodes follows this path.
In more detail, we claim that for every $r\in \{1,2,...,x\}$,
the probability that all $k$ nodes are in state $q_r$ (from $P$)
after $r$ rounds is at least $((1/q)^k)^{r}$.
We can prove this claim by induction on the state index $r$.

{\em Basis ($r=1$).} By definition, all nodes start in state $q_s$, so this occurs
with probability $1 > (1/q)^k$.

{\em Step.}
Assume the claim holds for some $r < x$. To show
it holds for $r+1$,
we note that the probability that any single node transitions
from $q_r$ to $q_{r+1}$ is greater than $0$ (by our definition of {\em reachable}).
The smallest probability greater than $0$ in our system is $1/q$.
Therefore, we can refine our statement to say this occurs with probability at least $(1/q)$.
It follows that the probability that {\em all $k$} nodes make the same transition is  at least $(1/q)^k$. 
Multiply this 
probability
by the probability $((1/q)^k)^r$ that all nodes followed $P$ up to $q_r$
(provided by the inductive hypothesis),
and we get the desired final probability of $((1/q)^k)^{r+1}$.

We next argue that there exists a useful solo reachable path that is not too long:

\begin{quote}
{\em (*) There exists a solo reachable path $P = q_1,q_2,...,q_x$, 
defined with respect to ${\cal L}$,
such that $q_1=q_s$, $q_x = q_{a}$, and $x \leq s$.}
\end{quote}

The fact that there exists a solo reachable path that begins in the start state $q_s$
and ends in the final state $q_{a}$,
follows from the safety property of loneliness detection,
which says that when $n=1$,
with probability at least $1-\epsilon > 0$, the single node, starting in state $q_s$,
ends up in final state $q_a$.
The fact that $x \leq s$ follows from the observation that if there is {\em any} such solo reachable path leading
$q_{a}$,
then we can excise the loops to get a path that remains reachable, but that
 never visits the same state more than once.

Let $P$ be the solo reachable path identified 
above by claim (*).
If we apply our inductive argument to $P$,
and leverage the fact that the length of $P$
is no more than $s$ (i.e., $x \leq s$),
we get that the probability that all $k$ nodes in a network
of $k$ nodes follow path $P$ is at least $(1/q)^{k\cdot s}$.
Notice, if this occurs, we have violated safety.
It must therefore hold that $(1/q)^{k\cdot s} \leq \epsilon$.
We can set up this requirement as a simple constraint that will provide the minimum allowable value for $s$:

\[ (1/q)^{k\cdot s} \leq \epsilon \Rightarrow  q^{k\cdot s} \geq (1/\epsilon) \Rightarrow s\cdot k \log{q} \geq \log{(1/\epsilon)} \Rightarrow s \geq \frac{\log{(1/\epsilon})}{k\log{q}} = \frac{\log_{q}{(1/\epsilon})}{k} .  \]

\noindent Combining these pieces, we have shown that if ${\cal L}$ solves $(1,k)$-loneliness detection,
than it must be the case that $s \geq \frac{\log_{q}{(1/\epsilon})}{k}$, as required by the lemma statement.

\end{proof}

\paragraph{Reducing Loneliness Detection to Leader Election.}
We now leverage the above result on $(1,k)$-loneliness detection
to prove a lower bound for leader election under the guarantee
that the network size $n \geq \nmin$.
The proof proceeds by reduction: we show how to transform such a leader
election solution into a loneliness detection algorithm of similar state size.

%
\begin{theorem}
Fix some network size lower bound $\nmin \geq 1$. 
Let ${\cal A}$ be an algorithm that solves leader election with error bound $\epsilon$ and probability precision $q$
using $s$ states
in any network where $n \geq \nmin$.
It follows that $s\in \Omega(\frac{\log_{q}{(1/\epsilon)}}{\nmin})$.
\label{thm:lowerleader}
\end{theorem}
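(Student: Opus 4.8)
The plan is to prove the bound by reduction to $(1,k)$-loneliness detection and then invoke Lemma~\ref{lem:lonely}. Given a leader election algorithm ${\cal A}$ that is safe whenever $n \geq \nmin$, I would set $k = \max(\nmin, 2)$ (so that the requirement $k>1$ of Lemma~\ref{lem:lonely} holds, and so that ${\cal A}$ remains safe for all $n \geq k$), and construct from ${\cal A}$ a machine ${\cal L}$ that solves $(1,k)$-loneliness detection using $s + O(1)$ states, precision $q$, and error at most $\epsilon$. Since $\max(\nmin,2) = \Theta(\nmin)$, Lemma~\ref{lem:lonely} then yields $s + O(1) = \Omega(\log_q(1/\epsilon)/\nmin)$, which gives $s = \Omega(\log_q(1/\epsilon)/\nmin)$ after absorbing the additive constant (the bound is non-trivial only when $\log_q(1/\epsilon)/\nmin$ exceeds a fixed constant anyway).

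The heart of the argument is the construction of ${\cal L}$. I would interleave two kinds of rounds: \emph{${\cal A}$-rounds}, in which every node that has not yet been elected executes one faithful step of ${\cal A}$, and \emph{probe-rounds}, reserved for a constant-size company-detection gadget. A node that enters the leader state $q_\ell$ stops advancing ${\cal A}$ but continues to reproduce $q_\ell$'s fixed channel behavior during ${\cal A}$-rounds, so that for the remaining nodes the ${\cal A}$-rounds still constitute a genuine execution of ${\cal A}$ with a terminal leader. In the probe-rounds it instead runs a handshake: it beeps an announcement, then listens; every non-elected node listens during all probe-rounds and, the instant it hears an announcement, beeps back once in the next probe-round and then enters $q_c$. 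The announcer enters $q_c$ if it hears any response and $q_a$ if it hears silence.

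I would then verify correctness in the two required regimes, leaning on the synchronous round structure to make detection \emph{exact}. When $n=1$, the liveness of ${\cal A}$ drives the single node into $q_\ell$; it announces, hears only silence (there is genuinely no other node), and enters $q_a$, with zero added error. When $n \geq k$, the safety of ${\cal A}$ guarantees that with probability at least $1-\epsilon$ at most one node ever enters $q_\ell$, so there is at least one non-elected node available to act as a responder; because all nodes advance in lockstep, every non-elected node hears the announcement in the same probe-round, beeps back, and enters $q_c$, while the unique announcer hears this response and also enters $q_c$. Thus ${\cal L}$ errs only when the safety of ${\cal A}$ fails, an event of probability at most $\epsilon$, and every node terminates with probability $1$ by liveness, so ${\cal L}$ solves $(1,k)$-loneliness detection as claimed.

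The step I expect to be the main obstacle is showing that the detection gadget introduces neither extra error nor more than $O(1)$ extra states, since an additive $\Theta(\log(1/\epsilon))$ term would swamp the target bound and render the reduction vacuous. The delicate points are (i) confirming that letting the leader mimic $q_\ell$ during ${\cal A}$-rounds keeps the safety and liveness guarantees of ${\cal A}$ intact over the short window before classification, so that the crowd case is always supplied a responder; and (ii) arguing the announce/respond handshake is exact---no false positive when alone, no miss when in a crowd---purely from the synchrony of rounds, so that the only error charged to ${\cal L}$ is inherited verbatim from ${\cal A}$. Once these are in place, the counting bound of Lemma~\ref{lem:lonely} transfers directly.
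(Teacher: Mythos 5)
Your proposal is correct and takes essentially the same route as the paper: it reduces $(1,k)$-loneliness detection to leader election by interleaving faithful rounds of ${\cal A}$ with dedicated announcement/response rounds, verifies the three regimes ($n=1$ exact, $n\geq k$ inheriting error $\epsilon$ from ${\cal A}$'s safety, intermediate sizes vacuous), and then invokes Lemma~\ref{lem:lonely}---with your explicit choice $k=\max(\nmin,2)$ being a careful touch the paper glosses over. The only quibble is bookkeeping: the interleaved machine must encode the current round type alongside ${\cal A}$'s state, so the construction uses $O(s)$ rather than $s+O(1)$ states, which changes nothing in the asymptotic conclusion.
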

\begin{proof}
Fix some algorithm ${\cal A}$
that solves leader election with error bound $\epsilon$ and probability precision $q$
using $s$ states
in a network where $n \geq \nmin$, for some integer $\nmin\geq 1$.
Our first step is to use ${\cal A}$ to create an algorithm ${\cal L}_{\cal A}$
that solve $(1,\nmin)$-loneliness detection for the same $\epsilon$ and $q$,
using $O(s)$ states.
Our new algorithm ${\cal L}_{\cal A}$ works as follows:

\begin{quote}
The algorithm partitions rounds into pairs.
During the first round of each pair, it executes ${\cal A}$.
The second round is used to announce the election of a leader. That is, if a node becomes leader according to ${\cal A}$,
it beeps in the second round. If any node beeps in a second round,
the algorithm stops its execution of ${\cal A}$ and moves onto the next phase.
This next phase consists of single round.
During this round, any node not elected leader beeps.
If this round is silent, then the algorithm enters final state $q_a$,
otherwise it enters state $q_c$.
\end{quote}

To analyze ${\cal L}_{\cal A}$, we first note that liveness follows directly from the liveness
guarantee of ${\cal A}$: once the simulation of ${\cal A}$ elects the leader,
${\cal  L}_{\cal A}$ will lead all nodes to the second phase where they
will then enter a final state after an single additional round.
We now consider safety. There are three relevant cases, depending on the value of $n$.

\begin{itemize}

\item {\em Case $\#1$:} $n=1$. In the case, the liveness guarantee of ${\cal A}$
(which holds regardless of the network size)
implies that the single node in the system will eventually become leader.
Because there are no other nodes in the system, the second phase round
will be silent. It follows that the single node will enter state $q_a$, as required.
It follows that safety is satisfied with probability $1$.

\item {\em Case $\#2$:} $n \geq \nmin$. In this case, by assumption, ${\cal A}$ correctly
solves leader election with probability at least $1-\epsilon$.
Assume this occurs and $u$ is the single leader elected.
If $n =1$ (which is possible when $\nmin=1$), the argument for Case \#1 applies,
and $u$ will enter $q_a$, as required.
Assume instead that $n > 1$.
In this case,  the second phase round will {\em not} be silent.
It follows that all nodes will enter state $q_c$, as required.
It follows that safety is satisfied with probability at least $1-\epsilon$.

\item {\em Case $\#3$:} $1 < n < \nmin$. In this case, there are no safety requirements.
Therefore, safety is vacuously satisfied with probability $1$.

\end{itemize}

We have just shown that given a solution to leader election that satisfies safety for $n \geq \nmin$,
we can solve $(1,\nmin)$-loneliness detection while growing the state size by at most a constant factor.
We can now pull together the pieces.
By Lemma~\ref{lem:lonely}, 
any solution to  $(1,\nmin)$-loneliness detection for a given $\epsilon$ and $q$,
requires a state size $s' = \Omega(\frac{\log_{q}{(1/\epsilon)}}{\nmin})$.
It follows that the states used by ${\cal A}$ to solve leader election for $\nmin$
can be no more than a constant
factor larger than $s'$, proving the same asymptotic bound on $s$, which
matches the theorem statement.
\end{proof}


\begin{wrapfigure}{R}{0.52\textwidth}
\begin{minipage}{0.52\textwidth}
\begin{algorithm}[H]
  \caption{Universal Leader Election Algorithm}      
       \begin{algorithmic}[1]
  \scriptsize
 \State $active \gets 1$
 \State $ko \gets 1$
 \State ${\hat q} \gets \min\{q, (1/\epsilon)\}$
 \State $done \gets$ {\bf \em [Termination Subroutine]}$(active, ko)$
 \State $ko \gets 0$
 \State 
 \While{({\bf not} $done$)}
 \State $participate \gets$ {\bf rbit}$()$ \Comment{Returns $0$ with prob $1/{\hat q}$, else $1$}
 \State $chan \gets \top$
 \State \Comment{Knock Out Logic}
 \If{$active \wedge participate$}
    \State {\bf beep}()
 \Else
   \State $chan \gets$ {\bf recv}()
 \EndIf
 \If{$active \wedge \text{\bf not } participate \wedge (chan = \top)$}
       \State $active \gets 0$
       \State $ko \gets 1$
 \EndIf
 \State \Comment{Termination Detection Logic}
 \If{$chan = \bot$}
	      \State $done \gets$ {\bf \em [Termination Subroutine]}$(active, ko)$
	      \State $ko \gets 0$
 \EndIf
 
 \EndWhile
 \State \Comment{Become Leader if Still Active}
 \If{$active$}
    \State $leader\gets 1$
 \Else
    \State $leader \gets 0$
 \EndIf
 \State {\bf return}$(leader)$

%
   \end{algorithmic}

       \label{alg:singlehop} 
      \end{algorithm}
\end{minipage}

%
\label{fig:leader}
\end{wrapfigure}

\subsection{The Universal Leader Election Algorithm}
\label{sec:universal}

We now turn our attention to leader election upper bounds.
The three results that follow adopt a template/subroutine approach. In more detail, Figure~\ref{fig:leader} describes what we call the {\em universal leader election} algorithm. 
This algorithm, in turn, 
makes calls to a ``termination subroutine." 
Different versions of this subroutine can be plugged into the universal algorithm, yielding different guarantees. 
Notice, this universal algorithm is parameterized with probability precision $q$ and error bound $\epsilon$,
which it uses to define the useful parameter $\hat q = \min\{q, (1/\epsilon)\}$.
This algorithm (as well as one of our termination subroutines) uses $1/{\hat q}$,
not $1/q$, as its smallest transition probability (intuitively, there is little advantage in using a probability too much smaller than our error bound $\epsilon$).

The basic operation of the algorithm is simple.  Every node is initially active.  Until the termination subroutine determines that it is time to stop, nodes repeatedly execute the knockout loop (lines 7--25).  In each iteration of the loop, each active node beeps with probability $1-1/\hat{q}$ and listens otherwise.  If a node ever hears a beep, it is knocked out, setting $ko=true$ and $active=false$.  In any silent iteration where no node beeps, they execute the termination subroutine to decide whether to stop.  Once termination is reached, any node that remains active becomes the leader.

\paragraph{Termination Subroutines.}
The goal of the termination subroutine is to decide whether leader election has been solved: it returns \emph{true} if there is a leader and \emph{false} otherwise.

The termination subroutine is called simultaneously by all the nodes in the system, and it is passed two parameters: the value of $active$, which indicates whether or not the calling node is still contending to become leader, and $ko$, which indicates whether or not it has
been knocked out in the main loop since the last call to the subroutine. 
We fix $R = 4\log_{\hat{q}}(\max(n, 1/\epsilon))$: a parameter, which as we will later elaborate,
captures a bound on the calls to the subroutine needed before likely termination.
We consider the following properties of a termination detection routine, defined with respect to error parameter $\epsilon$ and $R$:

\begin{enumerate} 
\item  \emph{Agreement}: Every node always returns the same value. 
\item \emph{Safety}: Over the first $R$ invocations, the probability that it returns true in any invocation with more than $1$ active node is at most $\epsilon/2$. 
\item \emph{Eventual Termination}: If it is called infinitely often with only one active node, then eventually (with probability 1), it returns true.
\item \emph{Fast Termination}: If it is called with only one active node, and with at least one node where $ko = true$, then it returns true.
\end{enumerate}

\paragraph{Universal Leader Election Analysis.} 
We now observe that the universal leader election algorithm is correct when combined with a termination subroutine that satisfies the relevant properties from above.
To do so,
we first determine how many rounds it takes until there is only one active node, and hence one possible leader.  We say that an iteration of the knockout loop (lines 7--25) is \emph{silent} if no node beeps during it.  (Notice that the termination routine is only executed in silent iterations of the knockout loop.) We first bound how long it takes to reduce the number of active nodes:
\begin{lemma}
\label{lem:ko}
Given probability $\epsilon \leq 1/2$ and parameter $R = 4\log_{\hat{q}}(\max(n, 1/\epsilon))$: 
after $R$ silent iterations of the knockout loop (lines 7--25), there remains exactly one active node, with probability at least $1-\epsilon/2$.
\end{lemma}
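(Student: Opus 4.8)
The plan is to track the active-node count through the knockout loop, measuring progress in ``silent-iteration time.'' Let $a_t$ be the number of active nodes just before iteration $t$. Only active nodes can beep (the beep branch requires $active\wedge participate$), so iteration $t$ is silent exactly when none of the $a_t$ active nodes draws $participate=1$, an event of probability $\hat q^{-a_t}$. Three structural facts follow. (i) $a_t$ is non-increasing, since a node only ever leaves the active set. (ii) $a_t\geq 1$ always: in a non-silent iteration the (at least one) beepers skip the knockout branch and survive, while a silent iteration knocks out no one. (iii) A silent iteration leaves the count unchanged, because a knockout requires $chan=\top$, i.e.\ some node to have beeped. Facts (i)--(iii) imply that once $a_t=1$ it remains $1$ forever. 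Hence the lemma fails exactly when the count is still $\geq 2$ at the $R$-th silent iteration, and it suffices to bound the probability of that event by $\epsilon/2$.

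I would then observe the count only at silent iterations: let $X_i$ be the count at the $i$-th silent iteration. By (iii) this is well defined, and because the per-iteration coin flips are i.i.d., $(X_i)$ is a time-homogeneous Markov chain; by (i) the bad event $\{X_R\geq 2\}$ forces $X_1,\dots,X_R$ all $\geq 2$. Writing $f(a):=\Pr[\text{count}\geq 2\text{ at the next silent iteration}\mid\text{count}=a]$, this gives
\[
\Pr[X_R\geq 2]\;\leq\;\Big(\sup_{a\geq 2} f(a)\Big)^{R-1}.
\]
Everything now rests on the single one-step claim that $f(a)\leq 1/\hat q$ for every $a\geq 2$.

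Proving that claim is the main obstacle, and it is where the dynamics of a non-silent iteration enter. At count $a$ each active node beeps independently with probability $1-1/\hat q$, and in a non-silent iteration the beepers are precisely the survivors, so the count jumps to the number of beepers. Thus from count $a$ one iteration is silent with probability $\sigma_a=\hat q^{-a}$, reduces to exactly one survivor with probability $m_{a,1}=a(1-\hat q^{-1})\hat q^{-(a-1)}$, stays at $a$ (all beep) with probability $(1-\hat q^{-1})^a$, or falls to some intermediate $a'\in\{2,\dots,a-1\}$. These transitions yield
\[
f(a)=\sigma_a+(1-\hat q^{-1})^a f(a)+\sum_{a'=2}^{a-1} m_{a,a'}\,f(a'),
\]
the reduce-to-$1$ branch contributing $0$. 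I would show $f(a)\leq \rho:=1/(2\hat q-1)\leq 1/\hat q$ by induction on $a$; the inductive step reduces to the inequality $\rho\geq \sigma_a/(\sigma_a+m_{a,1})$, and a direct computation gives $\sigma_a/(\sigma_a+m_{a,1})=1/(1+a(\hat q-1))$, maximized over $a\geq 2$ at $a=2$, where it equals exactly $\rho$. What makes the induction succeed---rather than a lossy union bound over the levels the count traverses---is the slack $m_{a,1}/\sigma_a=a(\hat q-1)\geq \hat q$ (valid for $a\geq 2$, $\hat q\geq 2$): a reduce-to-$1$ is at least $\hat q$ times likelier than a silent iteration at every level, which absorbs the intermediate reductions.

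Finally I would combine the pieces: $\Pr[X_R\geq 2]\leq(1/\hat q)^{R-1}=\hat q\cdot\hat q^{-R}$. With $R=4\log_{\hat q}(\max(n,1/\epsilon))$ we get $\hat q^{-R}=\max(n,1/\epsilon)^{-4}\leq\epsilon^4$, and since $\hat q=\min\{q,1/\epsilon\}\leq 1/\epsilon$ this gives $\Pr[X_R\geq 2]\leq\epsilon^3\leq\epsilon/2$ for $\epsilon\leq 1/2$. Because the count is always $\geq 1$, the complement is exactly the event that one active node remains, as required; note the constant $4$ is generous (any factor above $2$ would do), leaving ample room in this last step.
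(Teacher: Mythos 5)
Your proposal is correct, but it takes a genuinely different route from the paper's proof. The paper argues pairwise: for any two nodes $p$ and $q$, both can survive a prefix in which $p$ listens $r$ times only if $q$ stays silent in every one of those iterations, an event of probability at most $\hat{q}^{-r}$; since a silent iteration forces every active node to listen, a union bound over the fewer than $n^2$ pairs bounds by $n^2/\hat{q}^{R}$ the probability that two nodes are still active after $R$ silent iterations, and a short two-case analysis ($\epsilon \geq 1/n$ versus $\epsilon < 1/n$) shows $n^2/\hat{q}^{R} \leq \epsilon/2$ for the stated $R$. You instead track the entire active count as a Markov chain sampled at silent iterations and prove, by first-step analysis and induction, the one-step collapse bound $f(a) \leq 1/(2\hat{q}-1) \leq 1/\hat{q}$ for all $a \geq 2$ (your recursion, base case $f(2)=1/(2\hat{q}-1)$, the reduction of the inductive step to $\rho \geq \sigma_a/(\sigma_a+m_{a,1}) = 1/(1+a(\hat{q}-1))$, and the final numeric chain $\hat{q}\cdot\hat{q}^{-R} \leq \epsilon^3 \leq \epsilon/2$ all check out). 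The trade-off is clear: the paper's argument is a few lines with no recursion, while yours is more delicate but yields a strictly stronger bound, $\Pr[X_R \geq 2] \leq \hat{q}^{\,1-R}$, with no $n^2$ factor. Consequently no case analysis is needed, and in fact your proof shows the lemma would already hold with $R$ on the order of $\log_{\hat{q}}(1/\epsilon)$, independent of $n$ --- the $\max(n,1/\epsilon)$ in the parameter is an artifact of the paper's union bound over pairs, not of the knockout dynamics. Both treatments share the same implicit simplifications: silent iterations occur almost surely, and the termination subroutine never alters the $active$ flags, so it can be ignored for this lemma.
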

\begin{proof}
Notice that there is always at least one active node, because a node becomes inactive only upon detecting a beep from another active node.  
Fix two nodes $p$ and $q$.  For $p$ to remain active, node $q$ has to remain silent whenever $p$ listens. 
 Thus, if in some prefix of the execution, node $p$ listens $r$ times, the probability that both $p$ and $q$ remain active is at most $1/\hat{q}^r$.

Taking a union bound over all pairs of nodes, we can upper bound by $n^2/\hat{q}^R$ the probability that there exist any pair of nodes $p$ and $q$ where one of the nodes listens $R$ times and both nodes remain active.
Equivalently, we have shown that the probability of $R$ silent iterations of the knockout loop, while at least $2$  nodes are active, is bounded by $n^2/\hat{q}^R$.  

Now recall that $R = 4\log_{\hat{q}}(\max(n, 1/\epsilon))$.  Consider the two cases:
\begin{itemize}
\item If $\epsilon \geq 1/n$, then $R = 4\log_{\hat{q}}{n}$, and therefore: $n^2/\hat{q}^R \leq n^2/n^4 \leq \epsilon/2$.
\item If $\epsilon < 1/n$, then $R = 4\log_{\hat{q}}(1/\epsilon))$, and therefore: $n^2/\hat{q}^R \leq n^2/(1/\epsilon)^{4} \leq \epsilon^{2} \leq \epsilon/2$.
\end{itemize}
In both cases, within $R$ silent iterations, there is exactly one leader with probability at least $1-\epsilon/2$.
\end{proof}

Let $T$ be a termination subroutine that satisfies Agreement and Eventual Termination.  In addition, assume that $T$ satisfies safety in networks of size at least $\nmin$. We can now show that the universal leader election algorithm is correct with termination subroutine $T$:
\begin{theorem} \label{thm:universal}
If termination subroutine $T$ uses $s$ states and precision $q$, then the universal algorithm solves leader election with error $\epsilon$,
 $s + O(1)$ states, and $q$ precision  (guaranteeing safety only in networks of size $n \geq \nmin$).
\end{theorem}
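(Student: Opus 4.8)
The plan is to verify the three deliverables of the statement separately: the resource bounds (states and precision), liveness, and safety. The resource bounds are immediate. The universal template maintains only the constant-range variables \emph{active}, \emph{ko}, \emph{done}, \emph{participate}, \emph{chan}, and \emph{leader} together with a constant program counter, so its control logic contributes only $O(1)$ states on top of the $s$ states of the subroutine $T$, for a total of $s+O(1)$. For precision, the only random transition outside $T$ is the call to \textbf{rbit}, which returns $0$ with probability $1/\hat q$; since $\hat q=\min\{q,1/\epsilon\}$ satisfies $2\le\hat q\le q$ (using $q\ge 2$ and $\epsilon\le 1/2$, so $1/\epsilon\ge 2$), we have $1/\hat q\in[1/q,\,1-1/q]$, a legal $q$-precision transition, and $T$ uses precision $q$ by hypothesis.

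For liveness I would first record the invariant that at least one node is always active: a node sets \emph{active} to $0$ only when it listens and hears a beep, which requires some other active node to have beeped and hence to remain active. Since the active count only ever decreases, it converges almost surely to some limit $a^*\ge 1$. I would then argue that almost surely the loop halts. If $a^*\ge 2$, then from some round on there are $a^*$ active nodes and no further knockouts; but with $\ge 2$ active nodes each iteration independently produces a knockout (some node beeps while another listens) with a fixed positive probability, so by a Borel--Cantelli / geometric-trials argument a knockout occurs eventually unless the loop first halts---either way the $a^*\ge 2$ scenario cannot persist. Once the count reaches $1$, the unique active node listens, producing a silent iteration, with probability $1/\hat q$ each round, so silent iterations recur infinitely often, each invoking $T$ with a single active node; \emph{Eventual Termination} then forces \emph{done} to become true almost surely. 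By \emph{Agreement} all nodes observe this simultaneously, exit the loop, and the surviving active node becomes leader. Note this argument never uses $T$'s safety, so liveness holds for every $n$.

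For safety I would observe that more than one leader arises exactly when the subroutine returns true in an invocation where at least two nodes are still active (all such active nodes then set \emph{leader}$=1$). I would bound this bad event by a union over two regimes matched by the common threshold $R=4\log_{\hat q}(\max(n,1/\epsilon))$. The invocations of $T$ coincide with the silent iterations of the knockout loop (plus the single pre-loop call on line~4), so: (i) over the first $R$ invocations, the \emph{Safety} property of $T$ caps the probability of returning true with $\ge 2$ active nodes at $\epsilon/2$; and (ii) by Lemma~\ref{lem:ko}, after $R$ silent iterations exactly one node remains active with probability at least $1-\epsilon/2$, so with probability at most $\epsilon/2$ can any \emph{later} invocation see $\ge 2$ active nodes, the active count being non-increasing. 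Summing gives total error at most $\epsilon$. Crucially, step~(i) invokes $T$'s safety, which is assumed only for $n\ge\nmin$; this is exactly why the overall guarantee is restricted to $n\ge\nmin$, whereas Lemma~\ref{lem:ko} itself holds for all $n$.

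I expect the main obstacle to be the bookkeeping in the safety union bound: reconciling the ``first $R$ invocations'' quantifier of the \emph{Safety} property with the ``$R$ silent iterations'' quantifier of Lemma~\ref{lem:ko}, since the pre-loop invocation on line~4 introduces a harmless off-by-one between the two counts, and verifying that these two regimes genuinely cover every way a premature true can be returned. A secondary subtlety is making the liveness convergence argument rigorous---carrying out the ``converges to $a^*$, then cannot stall'' reasoning with an honest probabilistic argument conditioned on the loop still running, and confirming via \emph{Agreement} that no node can become leader while another is still looping.
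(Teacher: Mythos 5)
Your proposal is correct and follows essentially the same route as the paper's proof: liveness via the invariant that at least one node stays active, positive-probability knockouts driving the active count to one, and then \emph{Eventual Termination} plus \emph{Agreement}; safety via a union bound pairing Lemma~\ref{lem:ko} (at most $\epsilon/2$ probability of surviving $R$ silent iterations with two or more active nodes) with the subroutine's \emph{Safety} property over the first $R$ invocations (at most $\epsilon/2$ probability of a premature true). Your explicit verification of the $s+O(1)$ state count and the legality of the $1/\hat q$ transition under precision $q$ is left implicit in the paper, and your handling of the off-by-one from the pre-loop invocation is a harmless refinement rather than a different argument.
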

\begin{proof} 
We consider the two properties of leader election in turn. 

\smallskip

\emph{Liveness:} First, there is always at least one active node, because a node becomes inactive only upon receiving a beep from another active node. In every iteration of the knockout loop that starts with more than one active node, there is a non-zero probability that at least one active node is knocked out, and therefore with probability one there is eventually only one active node remaining.  From that point on, in every iteration, there is a constant probability that the remaining active node is silent and hence the termination subroutine is executed by all nodes.  By the Termination property, eventually it will return true and all nodes terminate.  Thus with probability one, eventually there is at least one leader.

\emph{Safety:} Assume the network is of size at least $\nmin$ (the only case for which safety must hold). 
The probability that the algorithm makes it through $R$ silent iterations of the knockout loop with more than one leader is at most $\epsilon/2$, via Lemma~\ref{lem:ko}.  Notice that the termination routine is only executed in silent iterations of the knockout loop, and over these $R$ (potentially bad) iterations, there is a probability of at most $\epsilon/2$ of improperly terminating and entering the leader state with multiple active nodes (as provided
by the termination detection safety property). A union bound
combines these errors for a total error probability less than $\epsilon$.

\end{proof}

While the preceding theorem can be used to show the feasibility of solving leader election, it does not bound the performance.  For that, we rely on termination subroutines that ensure fast termination:

\begin{theorem}
If termination subroutine $T$ satisfies Fast Termination instead of Eventual Termination, and if it uses $s$ states and $q$ precision, and if it runs in time $t$, then the universal algorithm solves leader election with error $\epsilon$ with $s + O(1)$ states and $q$ precision (guaranteeing safety only in networks of size $\geq \nmin$).  Furthermore, it terminates in $O(t\log_{\hat{q}}(n + 1/\epsilon))$ rounds, with probability at least $1-\epsilon$.
\label{thm:universalfast}
\end{theorem}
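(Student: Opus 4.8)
The plan is to reuse the structure of Theorem~\ref{thm:universal} wherever Fast Termination plays the same role that Eventual Termination played there, and then to supply a fresh liveness argument together with a new running-time bound. The state and precision accounting is immediate: the universal template in Figure~\ref{fig:leader} wraps the subroutine $T$ in only the $O(1)$ control logic of the flags $active$, $ko$, and $done$, giving $s+O(1)$ states, and the only transition probability it introduces is $1/\hat q=\max\{1/q,\epsilon\}\in[1/q,1-1/q]$, which respects precision $q$. The safety argument is verbatim that of Theorem~\ref{thm:universal} and is insensitive to which termination property we assume: for $n\ge\nmin$, Lemma~\ref{lem:ko} bounds by $\epsilon/2$ the probability of $R$ silent iterations with more than one active node, the subroutine's Safety property bounds by $\epsilon/2$ the probability that any of its first $R$ invocations returns true with more than one active node, and a union bound keeps the total error below $\epsilon$.

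For liveness I would first observe that with probability one the knockout loop eventually reaches a configuration with a single active node, since while at least two nodes are active each iteration eliminates one with probability bounded away from zero. The key point is that this final elimination happens in a \emph{non-silent} iteration and sets $ko=true$ on the knocked-out node, and since $ko$ is cleared only inside silent iterations---of which there are none between that knockout and the next one---the next silent iteration invokes $T$ with exactly one active node and at least one node still carrying $ko=true$. A silent iteration recurs with probability one (the lone active node listens with probability $1/\hat q>0$ each round), so Fast Termination eventually forces $T$ to return true, at which point all nodes stop (by Agreement) and the lone active node elects itself.

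For the time bound I would combine Lemma~\ref{lem:ko} with this same liveness mechanism: with probability at least $1-\epsilon/2$ the active set shrinks to one within $R$ silent iterations, and termination then fires at the immediately following silent iteration, so the execution contains $O(R)$ silent iterations in total. Each silent iteration costs a single knockout round plus one $t$-round invocation of $T$, contributing $O(Rt)$ rounds, while each non-silent iteration costs only $O(1)$ rounds and their number is governed by the same geometric reduction underlying Lemma~\ref{lem:ko} (a node survives its listens only if its companions stay silent each time, a probability that decays geometrically). Summing gives $O(Rt)=O(t\log_{\hat q}(n+1/\epsilon))$ rounds, and since the only failure event invoked is the reduction bound of Lemma~\ref{lem:ko}, this holds with probability at least $1-\epsilon/2\ge 1-\epsilon$.

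The main obstacle is the running-time step, and specifically the coupling of two different stopping conditions. Lemma~\ref{lem:ko} controls only how many silent iterations are needed to reduce the active set to one, whereas termination is triggered only by a silent iteration that is both preceded by the final knockout (so that $ko$ is live) and occurs while a single node is active; I expect the care to lie in checking that reaching one active node within $R$ silent iterations really does force termination within $O(R)$ silent iterations, and that the interleaved non-silent knockout rounds do not inflate the round count beyond the claimed $O(Rt)$ in the constant-precision regime ($\hat q=O(1)$) in which the bound is applied. A secondary check is that the reduction failure and the subroutine's safety failure compose cleanly, so that both the correctness guarantee and the timing guarantee retain their $1-\epsilon$ form.
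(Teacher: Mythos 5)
Your overall strategy matches the paper's. The paper packages correctness by observing that Fast Termination implies Eventual Termination (the conditions for Fast Termination hold at the first silent iteration after the last knockout) and then citing Theorem~\ref{thm:universal}, while you re-derive safety verbatim and liveness directly; the running-time argument is exactly your mechanism: after the final knockout, which occurs in a non-silent iteration, the knocked-out node still carries $ko = true$ at the next silent iteration (since $ko$ is cleared only when the subroutine is invoked, which happens only in silent iterations), so $T$ is called with one active node and a live $ko$ flag and must return true; combined with Lemma~\ref{lem:ko} this caps the number of silent iterations at $O(R)$, each costing $t + O(1)$ rounds. The accounting for the interleaved non-silent iterations that you flag as the main obstacle is glossed over in the paper's proof as well, so that concern does not separate you from the paper.

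There is, however, a genuine gap: the case $n=1$. Both your liveness and your timing arguments hinge on ``the final elimination,'' i.e., on some knockout having occurred to set $ko = true$. When $n = 1$ no knockout ever happens, so once the lone node clears $ko$ after its first call to $T$, every subsequent in-loop invocation of $T$ occurs with one active node but \emph{no} node with $ko = true$; Fast Termination then guarantees nothing, and since Eventual Termination has been dropped, nothing forces $T$ to ever return true---your argument yields no liveness (let alone a time bound) for a lone node, even though the leader-election definition requires liveness in every execution, including $n=1$. The resolution is a detail of the template you never invoke: in the initialization, before the loop, every node calls $T$ once with $active$ and $ko$ both equal to $1$, so for $n = 1$ Fast Termination fires at this initial call and the node terminates before ever entering the knockout loop. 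The paper's proof makes exactly this the first case of its running-time analysis ($n=1$ versus $n>1$); your proof needs the same case split.
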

\begin{proof}
Notice that Fast Termination implies Eventual Termination: eventually, there will be only one active node, since if there is more than one active node, then in every iteration of the knockout loop there is a positive probability that one node is knocked out; in the iteration immediately after the last knockout occurs, the conditions for Fast Termination are satisfied.
We can, therefore, apply
Theorem~\ref{thm:universal} to establish the algorithm as a correct leader election algorithm.  
It remains only to consider the running time.  There are two cases. 

Assume $n=1$.  In this case, in Lines 1--3, the lone node calls the termination subroutine with $ko$ and $active$ both equal true.  By the fast termination property, the lone node terminates before it even enters the knockout loop.

Assume $n > 1$.  By Lemma~\ref{lem:ko}, we know that with probability at least $1-\epsilon$, there is only one active node within $O(tR)$ time.  Consider the last round that begins with at least two active nodes.  In that round, at least one node is knocked out, and hence the termination routine is called with only one active node and at least one node with $ko$ equal to true.  By the fast termination property (and the agreement property), all nodes terminate after this call. 

To calculate the final time cost, we note that each iteration of the knockout loop requires $t + O(1)$ time: satisfying the claimed time complexity upper bound. 
\end{proof}

\subsection{Optimal Leader Election}
\label{sec:leader:stateoptimal}

Here we define a termination subroutine that, when combined with the universal leader election algorithm, matches our lower bound from Theorem~\ref{thm:lowerleader}.
In more detail, fix an error bound $\epsilon$ and probability precision $q$. Fix some lower bound $\nmin \geq 1$ on the network size.
We describe a termination detection subroutine that we call {\em StateOptimal}$(\nmin)$ that requires $O(\lceil \frac{\log_{q}{(1/\epsilon)}}{\nmin}\rceil)$ states, and guarantees Agreement, Termination, and Safety in any network of size $n \geq \nmin$.

There are two important points relevant to  this leader election strategy. First, for $\nmin = 1$, it provides a general solution that works in every size network. Second, the state requirements for this algorithm are asymptotically optimal according to Theorem~\ref{thm:lowerleader}.
As will be clear from its definition below, the cost of this optimality is inefficiency (its expected time increases exponentially with $n$). 
We will subsequently identify a pair of more efficient solutions that gain efficiency at the cost of some optimality under some conditions.

\paragraph{The {\em StateOptimal}$(\nmin)$ Termination Detection Subroutine.}
The {\em StateOptimal}$(\nmin)$ subroutine, unlike the other subroutines we will consider, ignores the $active$ and $ko$ parameters. Instead, it runs simple distributed coin flip logic among {\em all} nodes.
In more detail, recall from the definition of the universal algorithm
that $\hat q = \min\{q, (1/\epsilon)\}$.  
The subroutine consists of $\delta = \lceil \frac{c\log_{\hat q}{(1/\epsilon)}}{\nmin}\rceil$ rounds, defined
for some constant $c \geq 1$ we will bound in the analysis.
In each round, each node beeps with probability $1 - 1/{\hat q}$. At the end of the $\delta$ rounds, each node returns $1$ if all $\delta$ rounds were silent, otherwise it returns $0$.

\paragraph{Analysis.}
It is straightforward to determine that all nodes return the same value from this subroutine (i.e., if any node beeps or detects a beep, all nodes will return $0$).  It is also straightforward to verify that implementing this subroutine for a given $\delta$ requires $\Theta(\delta) = \Theta(\lceil \frac{\log_{\hat q}{(1/\epsilon)}}{\nmin}\rceil) = \Theta(\lceil \frac{\log_{q}{(1/\epsilon)}}{\nmin}\rceil)$ states (we can replace the $\hat q$ with $q$ in the final step
because once $q$ gets beyond size $1/\epsilon$, the function stabilizes at $1$).  
Eventual termination is also easy to verify, as every call to the subroutine has a probability strictly greater than $0$ of terminating.


To show safety, we observe that the routine returns true only if all $n$ nodes are silent for all $\delta$ rounds.  The probability of this happening is exponentially small in $(\delta n)$ and hence it is not hard to show that every $R$ invocations, the probability that the subroutine returns true in any invocation with more than one active node is at most $\epsilon/2$.
%
\begin{lemma}[Safety]
Over the first $R$ invocations, the probability that the subroutine returns true in any invocation with more than one active node is at most $\epsilon/2$.
\end{lemma}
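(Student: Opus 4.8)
The plan is to reduce the lemma to a one-line calculation of the probability that a \emph{single} invocation returns true, and then apply a union bound over the (at most) $R$ invocations in question. First I would compute the per-invocation error probability. Since \emph{StateOptimal} ignores $active$ and $ko$ and has every one of the $n$ nodes beep independently with probability $1-1/\hat{q}$ in each of its $\delta$ rounds, a single node stays silent for all $\delta$ rounds with probability $\hat{q}^{-\delta}$, and, by independence across nodes, all $n$ nodes stay silent for all $\delta$ rounds with probability $p = \hat{q}^{-n\delta}$. The subroutine returns true exactly when this all-silent event occurs, so $p$ is precisely the probability a given invocation returns true, regardless of how many nodes are active.

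The second step is to exploit the definition of $\delta$ to make $p$ small. Because safety need only hold for $n \ge \nmin$, I would use both lower bounds $\delta \ge \frac{c\log_{\hat{q}}(1/\epsilon)}{\nmin}$ and $\delta \ge 1$. The first, together with $n \ge \nmin$, gives $n\delta \ge c\log_{\hat{q}}(1/\epsilon)$; the second gives $n\delta \ge n$. Averaging the two yields the key bound $n\delta \ge \tfrac12\bigl(n + c\log_{\hat{q}}(1/\epsilon)\bigr)$, and hence $p \le \hat{q}^{-n/2}\,\epsilon^{c/2}$, using $\hat{q}^{-(c/2)\log_{\hat{q}}(1/\epsilon)} = \epsilon^{c/2}$. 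Retaining \emph{both} surviving factors here, rather than collapsing to the cruder $p \le \epsilon^{c}$, is the crux of the argument.

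The third step is the union bound and a short case analysis. Because each invocation draws fresh coins, the probability that some invocation among the first $R$ both has more than one active node and returns true is at most $R\,p$. Substituting $R = 4\log_{\hat{q}}(\max(n,1/\epsilon))$, I would split into the two cases used in Lemma~\ref{lem:ko}. When $n \ge 1/\epsilon$, we have $R = 4\log_{\hat{q}}{n}$ and $R\,p \le 4\log_{\hat{q}}(n)\,\hat{q}^{-n/2}\,\epsilon^{c/2}$; since $\log_{\hat{q}}(n)\,\hat{q}^{-n/2}$ is bounded by an absolute constant, this is $O(\epsilon^{c/2})$. When $n < 1/\epsilon$, we have $R = 4\log_{\hat{q}}(1/\epsilon)$ and $\hat{q}^{-n/2}\le 1$, so $R\,p \le 4\log_{\hat{q}}(1/\epsilon)\,\epsilon^{c/2} = O(\epsilon^{c/2}\log(1/\epsilon))$. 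In both cases the bound is $o(\epsilon)$ once $c$ is a large enough fixed constant, so choosing $c$ sufficiently large drives the total below $\epsilon/2$ for all $\epsilon \le 1/2$.

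The main obstacle, which the structure above is designed to defeat, is that $R$ grows like $\log_{\hat{q}}{n}$ without bound while the $\epsilon$-only bound $p \le \epsilon^{c}$ carries no $n$-dependence, so their product would diverge as $n \to \infty$. The resolution is exactly the combined exponent bound: the $\hat{q}^{-n/2}$ factor decays fast enough to absorb the $\log_{\hat{q}}{n}$ growth of $R$, while the tunable $\epsilon^{c/2}$ factor handles the $\epsilon$-scaling. I expect the only fiddly part to be pinning down the constant $c$ (concretely, guaranteeing an inequality of the form $8C\,\epsilon^{c/2-1}\le 1$ on $(0,1/2]$), but this is a routine constant chase rather than a conceptual difficulty.
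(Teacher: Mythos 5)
Your proposal is correct and is essentially the paper's own proof: the same exact per-invocation probability $\hat{q}^{-n\delta}$, the same union bound over the $R$ invocations, and the same case split on $\max(n,1/\epsilon)$ driven by the two lower bounds on $\delta$ (the ceiling giving $\delta \ge 1$, and $\delta \ge c\log_{\hat{q}}(1/\epsilon)/\nmin$ combined with $n \ge \nmin$). The only difference is cosmetic: you retain both exponent contributions at once via the averaging step $n\delta \ge \tfrac12\bigl(n + c\log_{\hat{q}}(1/\epsilon)\bigr)$, where the paper simply picks the relevant bound in each case.
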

\begin{proof}
The termination routine returns true only if all $n$ nodes are silent for all $\delta$ rounds.
Because each node beeps with probability $1/\hat{q}$,  the probability of this occurring is at most $1/\hat{q}^{\delta n}$.
There are now two cases to consider.  
The first case is when $\epsilon > 1/n$.  We observe here that $\delta \geq c$ (due to the ceiling) for some constant $c$, and so $1/\hat{q}^{\delta n} \leq 1/2^{nc}$.  Taking a union bound over the $R = 4\log_{\hat{q}}(n)$ invocations, we conclude that the probability of violating safety is $R/2^{nc} \leq 1/n^2 \leq \epsilon/2$ for a proper choice of $c$.
The second case is when $\epsilon \leq 1/n$. We observe here that $\delta = c\log_{\hat{q}}(1/\epsilon)/\nmin$, and so $1/\hat{q}^{\delta n} \leq \epsilon^c$.  We now take a union bound over $R = 4\log_{\hat{q}}(\max(1/\epsilon))$ calls to the subroutine, and conclude that the probability of violating safety is $R\epsilon^c \leq \epsilon^2 \leq \epsilon/2$, for proper choice of $c$.
\end{proof}

Combined with Theorem~\ref{thm:universal}, this yields the following conclusion:
\begin{theorem}
For any network size lower bound $\nmin$, error parameter $\epsilon$ and precision $q$, the universal leader election algorithm combined with the {StateOptimal}$(\nmin)$ subroutine, solves leader election with respect to these parameters when run in a network of size $n \geq \nmin$, and requires only $s = \Theta(\lceil \frac{\log_{q}{(1/\epsilon)}}{\nmin}\rceil)$ states.
 \label{thm:stateoptimal}
\end{theorem}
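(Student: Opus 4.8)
The plan is to treat this theorem as a packaging result: the genuine probabilistic content already lives in the Safety lemma proved just above and in Theorem~\ref{thm:universal}, so the proof reduces to checking that \emph{StateOptimal}$(\nmin)$ meets the three hypotheses that Theorem~\ref{thm:universal} demands of a termination subroutine—Agreement, Eventual Termination, and Safety for $n \geq \nmin$—together with a clean accounting of its state cost, and then invoking that theorem. First I would dispatch Agreement. Since the subroutine runs identical coin-flip logic at every node, and a node returns $1$ exactly when it listened and heard silence in all $\delta$ rounds, the only execution in which \emph{any} node returns $1$ is one in which every round is fully silent (every node listens); in that case every node returns $1$, and in every other execution at least one node beeps in some round, so each node either beeped itself or heard a beep and hence returns $0$. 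Thus all nodes agree.

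Next I would verify Eventual Termination. A single invocation returns true precisely when all $n$ nodes are silent in each of the $\delta$ rounds, an event of probability $(1/\hat{q})^{\delta n} > 0$ that is independent across invocations. If the subroutine is called infinitely often, then with probability $1$ some invocation eventually realizes this event and returns true. For Safety I would cite the Safety lemma established immediately above, emphasizing that its two-case analysis (splitting on whether $\epsilon > 1/n$ or $\epsilon \leq 1/n$) crucially uses $n \geq \nmin$ to guarantee $\delta n \geq c\log_{\hat{q}}(1/\epsilon)$ in the small-$\epsilon$ regime; this dependence is exactly why safety is promised only for $n \geq \nmin$.

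The remaining ingredient is the state count. Implementing \emph{StateOptimal}$(\nmin)$ requires only a counter that walks through the $\delta$ rounds while remembering whether a beep has yet been detected, so it uses $\Theta(\delta)$ states; since $\delta = \lceil c\log_{\hat{q}}(1/\epsilon)/\nmin\rceil$ and replacing $\hat{q}$ by $q$ only changes the expression once $q > 1/\epsilon$ (where it has already stabilized), we obtain $s = \Theta(\lceil \log_q(1/\epsilon)/\nmin\rceil)$. Feeding a subroutine with these properties and this state count into Theorem~\ref{thm:universal} yields a leader election algorithm with error $\epsilon$, precision $q$, and $s + O(1)$ states that is safe for $n \geq \nmin$; because $s = \Omega(1)$ (the ceiling forces $\delta \geq 1$), the additive $O(1)$ is absorbed, leaving $\Theta(\lceil \log_q(1/\epsilon)/\nmin\rceil)$ states. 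Matching this against the lower bound of Theorem~\ref{thm:lowerleader} upgrades the bound to a tight $\Theta$ and establishes the claimed optimality.

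I expect the only genuinely delicate step—already carried out in the Safety lemma—to be the union bound over the first $R$ invocations across the two parameter regimes, together with the reliance on $n \geq \nmin$ to push the per-invocation all-silent probability below $\epsilon$; once that is in hand, Agreement, Eventual Termination, and the state accounting are routine, and the theorem follows by assembling these pieces through Theorem~\ref{thm:universal}.
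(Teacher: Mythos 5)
Your proposal is correct and follows essentially the same route as the paper: verify that \emph{StateOptimal}$(\nmin)$ satisfies Agreement, Eventual Termination, and (via the Safety lemma, whose reliance on $n \geq \nmin$ you pinpoint correctly) Safety, count its states as $\Theta(\delta) = \Theta(\lceil \log_q(1/\epsilon)/\nmin\rceil)$ including the $\hat{q}$-to-$q$ replacement, and then invoke Theorem~\ref{thm:universal}. The only cosmetic difference is your closing appeal to Theorem~\ref{thm:lowerleader} for tightness, which is not needed for the theorem itself since the implementation's state count is already $\Theta(\delta)$ on its own.
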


\subsection{Fast Leader Election with Sub-Optimal State}
\label{sec:fast}

The leader election algorithm from Section~\ref{sec:leader:stateoptimal} can solve the problem with the optimal amount of states for any given combination of system parameters. It achieves this feat, however, at the expense of time complexity: it is straightforward to determine that this algorithm requires  time exponential in the network size to terminate. 
%
Here we consider a termination subroutine that trades some state optimality  to achieve a solution that is fast (polylogarthmic in $1/\epsilon$ rounds) and simple to define (it uses the minimal probabilistic precision of $q=2$). Furthermore, its definition is independent of the network size $n$, yet it still works for every possible $n$. 
For the purpose of this section, we assume that $q = \hat{q} = 2$.
As we show below, this subroutine uses $\Theta(\log{(1/\epsilon)})$ states. This is suboptimal when high precision (i.e., larger $q$) is available, and when there is a lower bound $\nmin$ on the size of the network.

\paragraph{The {\em Fixed Error} Termination Detection Subroutine.}
This termination subroutine consists of a fixed schedule of $\lceil \log{(2/\epsilon)} \rceil + 2$ rounds. During the first round, any node that calls the subroutine with parameter $ko$ equal to $1$ beeps while all other nodes receive. If no node beeps, then the subroutine is aborted and all nodes return  false.

Moving forward, assume this event does not occur (i.e., at least one node beeps in the first round). For each of the $\lceil \log{(2/\epsilon)} \rceil$ rounds that follow, every node with parameter $active = 1$, will flip a fair two-sided coin. If it comes up heads, it will beep, otherwise it will receive. Each node with $active = 1$ will start these rounds with a flag $solo$ initialized to $1$. If such a node ever detects a beep during a round that it receives, it will reset $solo$ to $0$ (as it just gained evidence that it is not alone). 

The final round of the subroutine is used to determine if anyone detected a non-solo execution of the subroutine. 
To do so, every node with $active =1$ and $solo=0$ beeps. If no node beeps in this final round, then all nodes return true. Otherwise, all nodes return false.

\paragraph{Analysis.}
We proceed as before, observing that all nodes return the same value from this subroutine since all observe the same channel activity in the first and last rounds.  It is also straightforward to verify that implementing this subroutine requires $O(\log{(1/\epsilon)})$ states to count the rounds and record $solo$.  Fast termination follows directly from a case analysis of the algorithm.
%

\begin{lemma}[Fast Termination]
If the {\em Fixed Error} subroutine is called with only one active node, and with at least one node where $ko = true$, then it returns true.
\label{lem:fasttermination}
\end{lemma}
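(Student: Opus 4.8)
The plan is to observe that Fast Termination is a purely deterministic consequence of the \emph{Fixed Error} schedule: under the stated hypotheses the channel activity in every round is forced, so no probabilistic reasoning is needed. I would argue phase by phase, following the three segments of the subroutine (opening round, coin-flipping rounds, concluding round).

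First I would handle the opening round. The hypothesis guarantees at least one node with $ko = true$, and that node beeps in the first round; hence the round is not silent, the abort clause does not fire, and all nodes proceed into the $\lceil \log(2/\epsilon) \rceil$ coin-flipping rounds rather than immediately returning false. I would note that the node carrying $ko = true$ need not be the active node---it may well be a freshly knocked-out, now-inactive node---but this is irrelevant, since all we require from this step is that the first round carries a beep.

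Second I would examine the coin-flipping phase. The key structural fact is that the only nodes that ever beep in these rounds are those with $active = 1$, and by hypothesis there is exactly one such node. Consequently, in every one of these rounds the channel is silent whenever the unique active node chooses to listen, because there is no other active node to produce a beep. Therefore the active node never detects a beep while receiving, and its $solo$ flag, initialized to $1$, is never reset to $0$. Then for the concluding round I would observe that a node beeps only if it has both $active = 1$ and $solo = 0$; the unique active node has $solo = 1$ by the previous step, so the final round is silent, and by the subroutine's rule a silent final round makes every node return true.

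The only point requiring care---and it is the mild obstacle here---is confirming that the inactive nodes are genuinely silent throughout both the coin-flipping rounds and the concluding round (they neither flip coins nor satisfy the $active = 1$ condition to beep), so that the single active node's view of the channel is exactly as claimed. Once this bookkeeping is pinned down, the three phases chain together with no slack and the conclusion is immediate; in particular, unlike Safety, this property holds with certainty rather than merely with high probability.
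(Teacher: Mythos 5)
Your proof is correct and follows essentially the same argument as the paper's: the $ko$-beep prevents the abort, the unique active node never hears a beep during the coin-flipping rounds so its $solo$ flag stays $1$, and hence the final round is silent and all nodes return true. The extra bookkeeping you do (checking inactive nodes stay silent, and noting the $ko = true$ node need not be the active one) is just a more explicit rendering of the same deterministic argument.
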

\begin{proof}
In this case, the node with $ko = true$ beeps in the first round, ensuring that everyone continues.  During the following $\log{(1/\epsilon})$ rounds, only the single active node beeps, and hence it terminates with $solo=1$.  Thus, no node beeps in the final round and everyone returns true.
\end{proof}

Safety requires a little more care, showing that the failure probabilities over $R$ invocations can be bounded by $\epsilon/2$, since the error probability depends on the number of active nodes.  
\begin{lemma}[Safety]
\label{lem:safetyfast}
Over the first $R$ invocations of the subroutine, the probability that it returns true in any invocation with more than one active node is at most $\epsilon/2$.
\end{lemma}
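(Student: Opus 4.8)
The plan is to reduce the whole event to a clean statement about the $L := \lceil \log(2/\epsilon)\rceil$ coin-flip rounds, and then to control how many invocations can occur at each contention level so that the per-level failure probabilities form a convergent geometric series. First I would pin down exactly when a single invocation with $m \geq 2$ active nodes can return true. Only active nodes beep during the coin-flip rounds and during the final round, so the subroutine returns true only if every active node finishes with $solo=1$; an active node keeps $solo=1$ in a given coin-flip round precisely when it does not receive while some other active node beeps. Hence all $m$ active nodes keep $solo=1$ through a round if and only if they all make the same choice (all beep or all receive), which happens with probability $2\cdot(1/2)^m = 2^{1-m}$. Since the $L$ rounds use independent coins, a single invocation with $m$ active nodes returns true with probability at most $p_m := 2^{(1-m)L}$. (Dropping the requirement that the first, $ko$-triggered round be non-silent only increases this bound.)

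Next I would bound, for each level $m \geq 2$, the expected number $N_m$ of invocations that occur while exactly $m$ nodes are active. Because the active set only shrinks (a node leaves only upon hearing a beep, and the subroutine itself never knocks anyone out), the active count is non-increasing, so the invocations partition into contiguous level-$m$ blocks. Within a level-$m$ block each knockout iteration is, independently, ``all receive'' (silent, triggering an invocation, probability $2^{-m}$), ``all beep'' (probability $2^{-m}$, no invocation and no knockout), or ``mixed'' (probability $1-2^{1-m}$, which shrinks the set and ends the block). Counting the ``all receive'' iterations before the first ``mixed'' iteration gives $E[N_m] = 2^{-m}/(1-2^{1-m}) \leq 2^{1-m}$ for $m \geq 2$.

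I would then combine these. Writing $B_m$ for the event that some level-$m$ invocation returns true, Markov's inequality together with the independence between whether an invocation occurs (determined by earlier coins) and whether it returns true (determined by that invocation's own fresh coin-flip rounds) gives $\Pr[B_m] \leq p_m\, E[N_m] \leq 2^{(1-m)L}\cdot 2^{1-m} = 2^{(1-m)(L+1)}$. A union bound over $m = 2, \ldots, n$ then yields $\sum_{m\geq 2}\Pr[B_m] \leq \sum_{j\geq 1} 2^{-j(L+1)} \leq \tfrac{2}{3}\,2^{-L} \leq \tfrac{2}{3}\cdot\tfrac{\epsilon}{2} = \tfrac{\epsilon}{3} < \tfrac{\epsilon}{2}$, using $2^{-L}\leq \epsilon/2$. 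Since the first $R$ invocations are a subset of all invocations, this bounds the desired probability.

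The main obstacle is exactly the multiplicity issue flagged in the introduction: a naive union bound over the (up to $R$) invocations multiplies the worst per-invocation failure probability $p_2 \approx \epsilon/2$ by $R = 4\log_2(\max(n,1/\epsilon))$, which is far too large. The resolution is to charge failures to contention levels rather than to individual invocations: each level contributes only $O(p_m)$ in expectation, and summing $p_m = 2^{(1-m)L}$ over $m$ is a geometric series dominated by its $m=2$ term. The one point demanding care is the inequality $\Pr[B_m] \leq p_m\, E[N_m]$, i.e. that each potential invocation's ``returns true'' coins are independent of the coins determining how many invocations actually occur at level $m$; this holds because distinct iterations and invocations draw fresh, independent random bits, so the occurrence indicator (a function of prior coins) and the success indicator (a function of the invocation's own coins) are independent.
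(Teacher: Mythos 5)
Your proof has a genuine gap: it never accounts for the \emph{initial} invocation of the termination subroutine, the one the universal algorithm makes on line 4, \emph{before} the knockout loop, with all $n$ nodes active and all of them holding $ko=1$. That call is not triggered by a silent knockout iteration, so it appears in none of your counts $N_m$; yet it is one of the first $R$ invocations, and for small $n$ it is the dominant failure mode. Concretely, for $n=2$ the initial invocation returns true with probability exactly $2^{-L}$, which can equal $\epsilon/2$ (take $\log(2/\epsilon)$ an integer), so it alone can exhaust the entire error budget; adding it to your level-2 accounting gives $p_2(1+E[N_2]) = \tfrac{3}{2}\cdot 2^{-L}$, i.e. up to $3\epsilon/4$, and your geometric series no longer fits under $\epsilon/2$.

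Moreover, the gap cannot be closed within your framework, because the step you dismiss as a harmless relaxation---``dropping the requirement that the first, $ko$-triggered round be non-silent''---discards exactly the mechanism the lemma's constant relies on. The relaxed subroutine (no $ko$-abort) genuinely violates the bound: for $n=2$ with $2^{-L}=\epsilon/2$, the initial invocation fails with probability $\epsilon/2$, and conditioned on it returning false, a later silent iteration at level $2$ occurs with constant probability ($1/3$, silent-before-mixed) and that fresh invocation fails with probability $2^{-L}$ again, pushing the total strictly above $\epsilon/2$. This is where the paper's proof differs from yours: it observes that any invocation in which no node has $ko=1$ aborts in its first round and returns false \emph{deterministically}, so only invocations immediately following a knockout (or the initial one) can ever return true, and those have strictly decreasing active counts $k_1 > k_2 > \cdots$, one per level; the union bound over that sequence is the paper's geometric series. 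Your expected-count argument is a nice alternative for the in-loop invocations at levels $m<n$ (it even avoids needing $ko$ there), but to prove the stated bound you must reinstate the $ko$-abort observation at level $n$: subsequent level-$n$ invocations abort because nobody has been knocked out yet, so level $n$ contributes only $p_n$, and then $p_n + \sum_{m=2}^{n-1} 2^{(1-m)(L+1)} \le \epsilon/2$ for all $n\ge 2$, with the $n=2$ case reducing to exactly $2^{-L}$.
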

\begin{proof}
First, we note that if there are $k$ nodes active, then in order to return true, all $k$ nodes must maintain $solo = 1$. If in any round, any two nodes take different actions---i.e., one beeps and the other is silent---then one will detect it is not solo and all nodes will subsequently return false.  Thus, the probability that every active node maintains $solo = 1$ is at most $1/2^{sk}$, where $s = \lceil{\log{2/\epsilon}}\rceil$.	

We now consider the $R$ invocations of the subroutine. 
 In any invocation where no node has $ko = true$, all the nodes always return false.  We can therefore
  focus only on the subset of $t \leq R$ invocations where at least one node has $ko = true$.
Let $k_1, k_2, \ldots, k_t$ be the number of active nodes in each invocation.  In any invocation where $ko = true$, we know that the number of active nodes is reduced by $1$ as compared to the last invocation. 
 We can therefore conclude that $k_1 > k_2 > k_3 > \ldots > k_t$.

By a union bound, the probability that we return true in any of these invocations, given that there are more than $1$ active nodes, is at most:
\begin{eqnarray*}
(1/2)^{k_1s} + (1/2)^{k_2s} + ... + (1/2)^{k_ts} &  < &  (1/2)^{k_1}(1/2)^{s} + (1/2)^{k_2}(1/2)^{s} + ... + (1/2)^{k_t}(1/2)^s \\
   &   = & (1/2)^s \big[  (1/2)^{k_1} + (1/2)^{k_2} + ... + (1/2)^{k_t} \big] \\
   &  \leq  & (1/2)^s \big[ (1/2)^t + (1/2)^{t-1} + ... + (1/2) \big] \\
   & < & (1/2)^s \leq \epsilon/2
\end{eqnarray*}
\end{proof}
%
%
Combined with Theorem~\ref{thm:universalfast}, these properties yield the following conclusion:

\begin{theorem}
For error parameter $\epsilon$, the universal leader election algorithm combined with the 
{Fixed Error} subroutine, solves leader election with respect to $\epsilon$
in every size network, using only $s=\Theta(\log{(1/\epsilon)})$ states and $q=2$.
With probability at least $1-\epsilon$, it terminates in $O(\log{(n + 1/\epsilon)}\log{(1/\epsilon)})$ rounds. %
 \label{thm:stateoptimalfast}
\end{theorem}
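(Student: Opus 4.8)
The plan is to assemble this result by verifying that the \emph{Fixed Error} subroutine satisfies exactly the hypotheses of Theorem~\ref{thm:universalfast}, and then substituting the relevant parameters. Concretely, I would first check the three structural properties the composition theorem requires. Agreement is immediate from the observation already recorded above: every node keys its return value on the channel activity in the first and last rounds of the subroutine, which all nodes observe identically. Fast Termination is exactly Lemma~\ref{lem:fasttermination}, and Safety (with respect to $R$ and $\epsilon$) is exactly Lemma~\ref{lem:safetyfast}. Crucially, the safety bound of Lemma~\ref{lem:safetyfast} holds for \emph{every} network size, so the associated size lower bound is $\nmin = 1$ and the algorithm we build will guarantee safety in networks of all sizes.

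Next I would tally the two resource parameters that feed into Theorem~\ref{thm:universalfast}. The subroutine need only count up to $\lceil \log(2/\epsilon)\rceil + 2$ rounds and store the single bit $solo$, so it uses $s = \Theta(\log(1/\epsilon))$ states; its running time is likewise $t = \lceil\log(2/\epsilon)\rceil + 2 = \Theta(\log(1/\epsilon))$ rounds. Since this section fixes $q = \hat{q} = 2$ (valid because $\epsilon \leq 1/2$), all of these quantities are expressed in terms of $\epsilon$ alone.

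With the hypotheses in hand, I would invoke Theorem~\ref{thm:universalfast}. Because Fast Termination implies Eventual Termination, that theorem internally delivers liveness (via Theorem~\ref{thm:universal}) as well as the error bound, yielding a leader election algorithm with error $\epsilon$ using $s + O(1) = \Theta(\log(1/\epsilon))$ states and precision $q = 2$, and---because safety holds for $\nmin = 1$---working in every size network. For the time bound I substitute $t = \Theta(\log(1/\epsilon))$ and $\hat{q} = 2$ into the $O(t\log_{\hat{q}}(n + 1/\epsilon))$ guarantee of Theorem~\ref{thm:universalfast}, obtaining termination in $O(\log(1/\epsilon)\log(n + 1/\epsilon))$ rounds with probability at least $1 - \epsilon$, as claimed. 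Finally, to upgrade the state count from $O$ to $\Theta$, I would invoke the lower bound of Theorem~\ref{thm:lowerleader} with $q = 2$ and $\nmin = 1$, which gives a matching $\Omega(\log_2(1/\epsilon)) = \Omega(\log(1/\epsilon))$.

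The main obstacle here is not any single calculation---the proof is essentially a composition of already-established pieces---but rather keeping the parameter bookkeeping consistent: confirming that it is the \emph{Fast} variant (not merely Eventual) of termination that has been established, checking that $\hat{q} = 2$ is the correct base to plug into the logarithm in the time bound, and being careful that the independence of Lemma~\ref{lem:safetyfast} from $n$ is precisely what licenses the ``every size network'' conclusion with $\nmin = 1$.
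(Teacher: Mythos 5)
Your proposal is correct and follows essentially the same route as the paper: the paper likewise proves this theorem by observing that the \emph{Fixed Error} subroutine satisfies Agreement, Fast Termination (Lemma~\ref{lem:fasttermination}), and Safety (Lemma~\ref{lem:safetyfast}) with $s = O(\log(1/\epsilon))$ states and $t = \Theta(\log(1/\epsilon))$ rounds, and then invokes Theorem~\ref{thm:universalfast} with $q = \hat{q} = 2$ to obtain the error, state, and $O(\log(1/\epsilon)\log(n+1/\epsilon))$ time bounds. Your additional appeal to Theorem~\ref{thm:lowerleader} to justify the $\Theta$ (rather than $O$) on the state count is a reasonable extra touch the paper leaves implicit, but it does not change the structure of the argument.
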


\subsection{Fast Leader Election with $O(1)$ States and High Probability}
\label{sec:faster}

The final termination detection subroutine we consider
requires only a constant number of states,
and when executed in a network of size $n$,
for any $n > 1$,
it solves leader election with high probability in $n$.
At first glance, this result may seem to violate the lower bound from Section~\ref{sec:leader:lower},
which notes that the state requirement grows with a $\log{(1/\epsilon)}$ factor as $\epsilon$ decreases.
The question is why a constant number of states is sufficient for our algorithm here even though this term grows
with $n$.
The answer relies on the fact that $\epsilon$ is always a function of $n$,
such that for any fixed $n$, it is true that $\nmin \geq n$, and therefore the $\nmin$ factor
in the denominator of our lower bound swamps the growth of the $\log{n}$ factor in the numerator.

\paragraph{The {\em Constant State} Termination Detection Subroutine.}
The subroutine here is identical to the {\em Fixed Error} subroutine, except the length of subroutine is not fixed in advance
(no node has enough states to count beyond a constant number of rounds---which is not enough for our purposes).
  Instead, we dynamically adapt the length
  of the subroutine to a sufficiently large function of $n$ using a distributed counting strategy.
  
In more detail, during the first round, any node that called the subroutine with parameter $ko$ equal to $1$ beeps while all other nodes receive. If no node beeps, then subroutine is aborted and all nodes will return value false (as is true for {\em Fixed Error}).
Assuming the subroutine has not aborted, the nodes then proceed to the main body.  During the main body,  we partition rounds
into even and odd pairs.
During the odd numbered rounds, we proceed as in {\em Fixed Error}: every node with parameter $active = 1$, flips a fair coin; if it comes up heads, it will beep, otherwise it will receive; 
each node with $active = 1$ will start these rounds with a flag $solo$ initialized to $1$; if such a node ever detects a beep during a round that it receives, it will reset $solo$ to $0$ (as it just gained evidence that it is not alone).

During the even rounds, the nodes run a repeated knockout protocol for $O(1)$ iterations, for some fixed constant bounded in our below analysis.  
In more detail,
each node (regardless of whether or not it has $active$ equal to true)
begins the subroutine with a flag $attack=1$ and a counter $count = 0$.  In each even round, each node with $attack=1$ flips a fair coin and beeps if it comes up heads; otherwise it listens.  Any node that listens in an even round and hears a beep sets $attack=0$.  If there is an even round in which no node beeps, then all nodes increment $count$ and reset $attack = 1$.  
This continues until $count$ grows larger than the fixed constant mentioned above,
When this occurs, all nodes move the final round, which is defined the same as the final round in {\em Fixed Error.}
That is: every node with $active =1$ and $solo=0$ beeps. If no node beeps in this final round, then all nodes return true. Otherwise, all nodes return false.

\paragraph{Analysis.}
The Liveness and Fast Termination properties follow from the same arguments used in our analysis of {\em Fixed Error}.
The main difficulty in analyzing this subroutine is proving Safety.
To do so, we first bound how long the subroutine is likely to run on any given invocation:

\begin{lemma}
\label{lem:knockoutlength}
For any constant $c$, there exists a $c' > c$ and a constant bound for $count$,
such that the main body of the subroutine runs for at least $c\log(n)$ rounds but no more than $c'\log{n}$ rounds, with high probability.
\end{lemma}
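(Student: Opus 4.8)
The plan is to decompose the main body of the subroutine into a sequence of \emph{phases}, where a new phase begins each time the nodes reset $attack=1$ (i.e., immediately after an increment of $count$) and ends at the next silent even round. Since the subroutine halts once $count$ exceeds a fixed constant, the number of phases is a constant, call it $C+1$, that we are free to choose. My strategy is to show that each individual phase lasts $\Theta(\log n)$ even rounds with high probability, and then sum over the $C+1$ phases using a union bound. Because the even and odd rounds are paired, the total number of main-body rounds is within a factor of two of the number of even rounds, so it suffices to control the latter. Setting $C$ large enough then forces the lower bound $c\log n$, and the corresponding constant $c'$ falls out of the per-phase upper bound.

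For the per-phase upper bound, I would track the number $m$ of attacking nodes. In a non-silent even round the survivors are exactly the nodes that beeped, so $m$ evolves as the number of heads among $m$ fair coins, a $\mathrm{Binomial}(m,1/2)$ random variable. A Chernoff bound shows this is concentrated around $m/2$, so after $O(\log n)$ even rounds the attacker count has dropped to at most $1$ with high probability; note the count can never reach $0$ without producing a silent round, which would end the phase first. Once a single attacker remains, each even round is silent with probability exactly $1/2$, so the phase ends within a further $O(\log n)$ rounds except with probability $2^{-\Omega(\log n)} = n^{-\Omega(1)}$. This yields a per-phase length of at most $b\log n$ with high probability, for some constant $b$.

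For the per-phase lower bound, the key observation is that a phase can only end on a silent even round, and a silent round with $m$ attackers requires all of them to listen simultaneously, an event of probability $2^{-m}$. While $m \geq 2\log n$ this probability is at most $n^{-2}$, hence negligible. A lower-tail Chernoff bound applied round by round shows that, with high probability, the attacker count stays above $2\log n$ for at least $a\log n$ rounds (for a suitable constant $a<1$), since it shrinks by only a constant factor per round from its initial value of $n$. Taking a union bound over these $a\log n$ rounds of the negligible silent-round probability shows that no silent round occurs during them, so the phase lasts at least $a\log n$ even rounds with high probability.

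I expect the main obstacle to be the regime in which the attacker count has become small, where silent rounds are no longer rare and the clean concentration arguments break down; the proof must argue that this regime is reached only after $\Omega(\log n)$ rounds, so that it cannot shorten a phase below the desired bound, while also not letting a phase overrun the $O(\log n)$ upper bound. A secondary, purely bookkeeping issue is that the increment-and-reset structure makes the phases mildly dependent, since each is conditioned on the previous one having ended; this is handled by noting that every reset restores \emph{all} $n$ nodes to $attack=1$, so each phase is a fresh run of the knockout with the same initial population, and the per-phase bounds apply identically. A final union bound over the constant number $C+1$ of phases then gives a total main-body length in $[c\log n,\, c'\log n]$ with high probability, with $C$ chosen so that $(C+1)a \geq c$ and $c'$ set to $2(C+1)b$.
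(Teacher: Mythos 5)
Your phase decomposition (one phase per increment of $count$, all $n$ nodes resetting $attack=1$, a constant number of phases) is exactly the structure of the paper's proof, and your bookkeeping observation that each phase is a fresh i.i.d.\ run of the knockout process is the same one the paper uses implicitly. Where you genuinely diverge is in the per-phase lower bound. The paper never tracks the attacker count: it partitions the $n$ nodes into $\Theta(\log n)$ groups of $n/\Theta(\log n)$ nodes, argues that each group independently contains, with constant probability, a node that flips $\log(n/\log n)$ heads in a row, and applies a Chernoff bound over the groups to get $\Theta(\log n)$ such persistent beepers with high probability; any one of them keeps every round non-silent for $\Theta(\log n)$ rounds. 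Your route---a lower-tail Chernoff bound on the $\mathrm{Binomial}(m,1/2)$ survivor count to keep $m$ above a logarithmic threshold, plus a union bound over the $2^{-m}$ per-round silence probability---is more direct and equally valid, with one tuning caveat: at the threshold $2\log n$ the per-round Chernoff failure probability is only $e^{-\Omega(\log n)} = n^{-\delta}$ for a small constant $\delta$, so to achieve an arbitrarily large polynomial exponent you should keep $m$ above $K\log n$ for a large constant $K$ (this only changes the constant $a$). Also note that conditioning on non-silence only stochastically increases the survivor count, so the lower-tail bound survives the conditioning.

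The one real gap is in your upper bound, and it is precisely the obstacle you flagged without resolving: Chernoff concentration of $\mathrm{Binomial}(m,1/2)$ around $m/2$ is not a high-probability statement once $m = O(\log n)$ (at $m = 10$, say, the count stalls with probability $2^{-10}$ in a given round), so ``after $O(\log n)$ rounds the count has dropped to at most $1$ with high probability'' does not follow as written. Two patches are available. Within your framework: once $m \le K\log n$, observe that for $m \ge 2$ each round either ends the phase or strictly decreases the count except with probability $2^{-m} \le 1/4$; since at most $K\log n$ decreases remain, a Chernoff bound over $O(\log n)$ rounds of these probability-$\ge 3/4$ successes completes the descent to $m=1$, after which your geometric argument applies. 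Alternatively, the paper's own upper bound sidesteps the count entirely and treats all regimes uniformly: fix a node $u$; in every even round in which $u$ attacks, with probability at least $1/2$ it listens, and then either it hears a beep (and stops attacking) or the round is silent (and the phase ends); hence after $c_3\log n$ rounds the probability that $u$ still attacks with the phase unfinished is at most $n^{-c_3}$, and a union bound over nodes, together with the fact that the attacker set cannot become empty mid-phase, forces the phase to have ended with high probability. Either patch makes your argument complete; the paper's per-node argument is the shorter of the two.
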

\begin{proof}
We will show that with high probability, each iteration of the knockout protocol runs for $\Theta(\log{n})$ rounds.  By choosing 
an appropriate constant bound for $count$ (i.e., the number of iterations of the knockout protocol), 
we can increase this length by any constant factor.

We begin by noting that as long as there are $\Theta(\log{n})$ nodes with $attack=1$, then, with high probability, the iteration does not end
in that round, as it is likely that at least one node will beep (this follows because the probability
of silence in this case is upper bounded by $1/2^{\Theta(\log{n})}$).
Also note that whenever a node flips a tails, it is eliminated.  Thus, we show that with high probability, at least $\Theta(\log{n})$ nodes flip $\Theta(\log(n/\log\log{n}) = \Theta(\log{n})$ heads in a row.
 
One easy way to to see why this is so is as follows.  Divide the nodes in $\Theta(\log{n})$ different groups of $n/\Theta(\log{n})$ nodes each.  For each group, the probability that no node flips sufficiently many heads in a row is at most: 
\begin{eqnarray*}
\left(1 - \frac{1}{2^{\log(n/\log{n})}}\right)^{n/\Theta(\log{n})} & \leq & \left(1 - \frac{\log{n}}{n}\right)^{{n/\Theta(\log{n})}} \\ 
& \leq & (1/e)^{\Theta(1)}
\end{eqnarray*}
Thus each group, independently, has a constant probability of having one node survive.  Since each group is independent and there are $\Theta(\log{n})$ groups, by a Chernoff bound there are $\Theta(\log{n})$ survivors with high probability.
Thus we conclude that with high probability, the subroutine does not terminate for $\Theta(\log(n/\log{n})) = \Theta(\log{n})$ rounds.

To show that the subroutine does eventually terminate for some larger term in  $\Theta(\log{n})$ is more straightforward.
Fix a specific node $u$. 
For every even round such that $u$ has $attack=1$,
there is a probability of at least $1/2$ that it receives a beep (and resets $attack$) 
or there is a silence.
It follows that with high probability one of these two things has happened after $\Theta(\log{n})$ rounds.
A union bound over all nodes shows that with high probability this is true for all nodes, indicating
either a silence has occurred or every node has $attack=0$ so a silence is about to occur.
\end{proof}

\begin{lemma}[Safety]
\label{lem:safetyfaster}
Over the first $R$ invocations of the subroutine, the probability that it returns true in any invocation with more than one active node is at most $1/n^c$, for a  constant $c$ we can grow with our constant bound on $count$.
\end{lemma}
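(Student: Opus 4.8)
The plan is to reuse the skeleton of the \emph{Fixed Error} safety argument (Lemma~\ref{lem:safetyfast}), replacing its fixed block of $\lceil\log(2/\epsilon)\rceil$ solo-test rounds with the $\Theta(\log n)$ solo-test (odd) rounds guaranteed by the distributed counter, and paying up front for the randomness of that count with one high-probability event. First I would isolate the invocations that matter. As in Lemma~\ref{lem:safetyfast}, any invocation in which no node has $ko=\mathit{true}$ is aborted in the first round and returns false, so I restrict attention to the $t \le R$ invocations in which at least one node has $ko=\mathit{true}$. Since $ko=\mathit{true}$ witnesses that a node went inactive since the previous such invocation, the numbers of active nodes satisfy $k_1 > k_2 > \cdots > k_t$, and I only need to control those with $k_i \ge 2$. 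I also record that in the high-probability regime we take $\epsilon = n^{-\Theta(1)}$, so $R = 4\log_2(\max(n,1/\epsilon)) = O(\log n)$.

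Next I would dispatch the random length. Let $\mathcal{E}$ be the event that in every one of the $R$ invocations the main body runs for at least $s_0 = \Theta(\log n)$ odd rounds; by Lemma~\ref{lem:knockoutlength} (which gives the main body length $\ge c\log n$, hence at least $(c/2)\log n$ odd rounds, with high probability) together with a union bound over the $R = O(\log n)$ invocations, $\Pr[\neg\mathcal{E}] \le 1/n^{c_1}$, where $c_1$ grows with the constant bound on $count$. The structural point I would emphasize is that the subroutine's length is determined entirely by the coin flips of the even (knockout) rounds, whereas whether the active nodes keep $solo=1$ is determined by the \emph{independent} coin flips of the odd rounds; thus $\mathcal{E}$ is a function of the even-round coins alone, and conditioning on it leaves the odd-round coins uniform.

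I would then bound each invocation. Exactly as in \emph{Fixed Error}, an invocation with $k_i \ge 2$ active nodes returns true only if all $k_i$ nodes flip identically in every odd round (any disagreement resets some node's $solo$ to $0$ and forces a beep in the final round). Under $\mathcal{E}$ there are at least $s_0$ odd rounds, so this event implies all $k_i$ nodes agree in the first $s_0$ odd rounds, which (using independence from the length-determining even coins) has probability at most $(1/2)^{(k_i-1)s_0} \le (1/2)^{s_0}$. Summing the bad events over the relevant invocations, and using that the $k_i \ge 2$ are distinct integers so the exponents $(k_i-1)s_0$ are distinct positive multiples of $s_0$, I get a geometric series $\sum_{i:\,k_i\ge 2}(1/2)^{(k_i-1)s_0} \le \sum_{j\ge 1}(1/2)^{js_0} \le 2\,(1/2)^{s_0}$, mirroring the telescoping estimate of Lemma~\ref{lem:safetyfast}.

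Finally I would combine the pieces: $\Pr[\text{some } k_i\ge 2 \text{ invocation returns true}] \le \Pr[\neg\mathcal{E}] + 2\,(1/2)^{s_0} \le 1/n^{c_1} + 2/n^{c''}$, where $s_0 \ge c''\log_2 n$ and $c''$ grows with the $count$ bound by Lemma~\ref{lem:knockoutlength}. Choosing the constant $count$ bound large simultaneously drives up both $c''$ and $c_1$, yielding total failure probability at most $1/n^{c}$ for any desired constant $c$, as claimed. The main obstacle, and the step I would write most carefully, is precisely the independence/conditioning argument between the even-round count and the odd-round solo tests: I must justify that conditioning on the high-probability length event $\mathcal{E}$ keeps the odd-round coins uniform, so that the per-invocation bound $(1/2)^{(k_i-1)s_0}$ is valid and the clean geometric sum survives despite the length being a random variable.
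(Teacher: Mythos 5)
Your proposal is correct and takes essentially the same route as the paper's proof: reuse the \emph{Fixed Error} safety argument of Lemma~\ref{lem:safetyfast}, substituting the $\Omega(\log n)$ main-body length guaranteed with high probability by Lemma~\ref{lem:knockoutlength}. You are in fact more careful than the paper's two-sentence sketch, which leaves implicit both the union bound over the $R$ invocations and the independence of the even-round (length-determining) and odd-round (solo-test) coins that you rightly single out as the delicate step.
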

\begin{proof}
The proof here proceeds much as in Lemma~\ref{lem:safetyfast}.  As before, we can conclude that the probability of returning true when there is more than one active node is bounded by $1/2^s$, where $s$ is the number of round of the main body of the subroutine.  We have shown in Lemma~\ref{lem:knockoutlength} that $s = \Omega(\log{n})$ with high probability, and so overall we conclude that the lemma holds with high probability.
\end{proof}
%

We can then show that the subroutine guarantees safety. Combined with the Theorem~\ref{thm:universalfast}, these properties yields the following conclusion:

\begin{theorem}
For any network size $n$, the universal leader election algorithm combined with the 
{Constant State} termination detection subroutine, solves leader election
with high probability in $n$ using $s=O(1)$ states and $q=2$.
Also with high probability in $n$,
it terminates in $O(\log^2{n})$ rounds.
 \label{thm:stateoptimalfaster}
\end{theorem}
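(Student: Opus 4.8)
The plan is to verify that the \emph{Constant State} subroutine satisfies the hypotheses of Theorem~\ref{thm:universalfast} and then read off the claimed bounds. Concretely, I would fix the target error $\epsilon = 1/n^c$, where $c$ is the constant controlled by the $count$ bound in Lemma~\ref{lem:safetyfaster}; since the subroutine uses only fair coins we have $q=2$, and because $\epsilon = 1/n^c < 1$ we get $\hat{q} = \min\{q,1/\epsilon\} = 2$, consistent with the standing assumption of this section. I would first establish the three cheap properties. \textbf{Agreement} holds because every node observes the same channel activity in the abort round, in each knockout round (silences are global, so all nodes increment $count$ in lockstep and agree on the subroutine's length), and in the final round. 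The $O(1)$ \textbf{state} bound holds because the subroutine need only track the round parity, the flags $solo$ and $attack$, and the constant-bounded counter $count$; combined with the $O(1)$ control state of the universal template this gives $s+O(1)=O(1)$. \textbf{Fast Termination} follows exactly as in Lemma~\ref{lem:fasttermination}: a single active node with some $ko$ companion passes the abort round, keeps $solo=1$ throughout the odd rounds, and forces a silent final round, so everyone returns true.

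The substantive input is already supplied by Lemmas~\ref{lem:knockoutlength} and~\ref{lem:safetyfaster}. Lemma~\ref{lem:safetyfaster} gives the \textbf{Safety} property: over the first $R$ invocations the probability of returning true with more than one active node is at most $1/n^{c'}$, and by enlarging the $count$ bound I can take $c'$ as large as desired, in particular $1/n^{c'}\le \epsilon/2 = 1/(2n^{c})$ for all sufficiently large $n$ (the finitely many small $n$ are absorbed into the high-probability statement). Lemma~\ref{lem:knockoutlength} gives that a single invocation runs for $t=\Theta(\log n)$ rounds with high probability. With these facts in hand, Theorem~\ref{thm:universalfast} applies directly: the universal algorithm combined with this subroutine solves leader election with error $\epsilon=1/n^c$, i.e.\ with high probability in $n$, using $O(1)$ states and $q=2$.

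For the time bound I would instantiate the running-time conclusion of Theorem~\ref{thm:universalfast}, namely $O(t\log_{\hat{q}}(n+1/\epsilon))$ rounds. Since $R = 4\log_2(\max(n,n^c)) = O(\log n)$, Lemma~\ref{lem:ko} reduces the network to one active node within $O(R)=O(\log n)$ silent iterations, each triggering one subroutine invocation of length $\Theta(\log n)$; plugging $t=\Theta(\log n)$ and $1/\epsilon=n^c$ yields $O(\log n\cdot\log_2(n+n^c))=O(\log^2 n)$ rounds.

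The main obstacle I anticipate is bookkeeping the randomness of $t$: Lemma~\ref{lem:knockoutlength} only bounds the length of an \emph{individual} invocation with high probability, whereas Theorem~\ref{thm:universalfast} is phrased for a fixed running time $t$. I would close this gap with a union bound over the $O(\log n)$ invocations that actually occur before termination---each overruns its $\Theta(\log n)$ budget with probability at most $1/n^{\Omega(1)}$, so all of them stay within budget with high probability---and then combine this with the (also high-probability) contention-reduction event of Lemma~\ref{lem:ko} and the Safety failure event, a final union bound keeping the total failure probability at $1/n^{\Omega(1)}$. The only other point requiring care is that the algorithm must remain oblivious to $n$ even though $\epsilon$ is chosen as a function of $n$; this is harmless, because neither the universal template nor the subroutine references $\epsilon$, $R$, or $n$ in its code (these appear only in the analysis), so the single oblivious algorithm simultaneously achieves error $1/n^c$ in every network of size $n$.
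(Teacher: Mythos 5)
Your proposal is correct and takes essentially the same route as the paper: establish Agreement and Fast Termination by the same arguments as for \emph{Fixed Error}, invoke Lemma~\ref{lem:knockoutlength} for the $\Theta(\log n)$ invocation length and Lemma~\ref{lem:safetyfaster} for Safety with error $1/n^c$, and then plug the subroutine into Theorem~\ref{thm:universalfast} with $\epsilon = 1/n^c$ to read off both the correctness and the $O(\log^2 n)$ time bound. If anything, your write-up is more careful than the paper's one-line conclusion, since you explicitly handle two points the paper leaves implicit: the union bound needed because the subroutine's length is only bounded with high probability (whereas Theorem~\ref{thm:universalfast} assumes a fixed running time $t$), and the fact that the algorithm is oblivious to $n$ even though the error parameter is chosen as a function of $n$ in the analysis.
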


\section{Solving General Distributed Decision Problems}

In the previous section, 
we studied upper and lower bounds for solving leader election in the beeping model.
Here we establish these leader election bounds to be (in some sense) fundamental
for useful distributed computation in this setting. 
In more detail, we use a combination of our leader election algorithms
as a key primitive in constructing an algorithm that
can simulate a logspace (in $n$) decider Turing Machine (TM) with a constant number of unary
input tapes (of size $O(n)$ each).
The simulation has  error probability at most $\epsilon$,
 requires only the minimum probabilistic precision ($q=2$),
 and uses $s =O(\log{(1/\epsilon)})$ states.
If high probability in $n$ is sufficient, then
the state size can be reduced to $s=O(1)$.
Formally:

\begin{theorem}
For any problem solvable by a logspace TM with a constant number of unary input tapes,
there exist constants $c,d\geq 1$,
such that for any error probability $\epsilon \in [0,1/2]$ and network size $n\geq 1$,
we can solve the problem in the beeping model in a network of size $n$ with probability at least $1-\epsilon$
using $s = c\log{(1/\epsilon)}$ states, precision $q=2$, and an expected running time of $O(n^d\log^2{(n+ 1/\epsilon)})$ rounds.
For high probability correctness, $s=O(1)$ states are sufficient.
\label{thm:tm}
\end{theorem}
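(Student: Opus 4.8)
The plan is to build a distributed counter machine in the beeping model whose finite control is carried by a single elected leader and whose counters are encoded collectively by the remaining nodes, and then to invoke Minsky's simulation~\cite{minsky1967} to run the target logspace TM on top of it. First I would elect one controller node to play the role of the machine's finite control. Because a logspace decider on inputs of magnitude $O(n)$ has only $\text{poly}(n)$ reachable configurations and, being a decider, never repeats one, the entire simulation consists of at most $O(n^{d})$ elementary counter operations, for a constant $d$ that also absorbs the radix-encoding overhead discussed below. This count is the crux of the state bound: I cannot afford to run an error-$(\epsilon/n^{d})$ leader election for each operation, since that would cost $\Omega(\log{n})$ states. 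Instead I would combine the \emph{Fixed Error} and \emph{Constant State} subroutines (Theorems~\ref{thm:stateoptimalfast} and~\ref{thm:stateoptimalfaster}) into a single ``select exactly one node'' primitive that fails with probability at most $\epsilon/n^{c}$ while using only $O(\log{(1/\epsilon)})$ states: the Constant State distributed timer contributes the $1/n^{c}$ factor essentially for free, since all $n$ nodes drive a $\Theta(\log{n})$-length countdown and hence two survivors persist with probability $1/\text{poly}(n)$ regardless of how few nodes are being selected among, while the explicit Fixed Error counting contributes the $\epsilon$ factor using $O(\log{(1/\epsilon)})$ states. A union bound over the $O(n^{d})$ operations then yields total error at most $n^{d}\cdot(\epsilon/n^{c}) \le \epsilon$ whenever $c \ge d$.

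Next I would implement the counters themselves. Each non-controller node carries one of $O(1)$ roles---\emph{available}, or \emph{assigned to counter $i$} for one of the constantly many counters---so it needs only $O(1)$ role states on top of the $O(\log{(1/\epsilon)})$ states it reuses for the selection primitive. The controller sequences the computation by broadcasting, via a fixed $O(1)$-round beep code, which operation to perform next: a \emph{zero-test} of counter $i$ is answered by having exactly the counter-$i$ nodes beep, so that silence signifies the value $0$; an \emph{increment} runs the selection primitive among the \emph{available} nodes to promote exactly one of them into counter $i$; and a \emph{decrement} runs it among the counter-$i$ nodes to return exactly one of them to \emph{available}. The Agreement property of the subroutines guarantees that every node observes the same outcome, keeping all nodes synchronized with the controller's control state. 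Since each counter is bounded by the number of nodes, $O(n)$, whereas a logspace configuration (work-tape contents together with the head positions) is a quantity in $\text{poly}(n)$, I would store each such quantity in a fixed-radix representation spread across a constant number of $O(n)$-bounded counters, implementing carries using the elementary increment, decrement, and zero-test operations; this inflates the operation count only by a polynomial factor, which is why the total stays within $O(n^{d})$. The unary input tapes, each of magnitude $O(n)$ and all defined as functions of $n$, are initialized by assigning the appropriate number of nodes to the corresponding counters at startup.

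For the running time, each of the $O(n^{d})$ counter operations is a single invocation of the selection primitive (or one zero-test round), which costs $O(\log^{2}{(n+1/\epsilon)})$ rounds by Theorem~\ref{thm:universalfast}; summing gives the claimed expected time of $O(n^{d}\log^{2}{(n+1/\epsilon)})$. The high-probability-in-$n$ variant drops the Fixed Error component and uses only the Constant State subroutine for every selection: each operation then fails with probability $1/n^{c}$ using $s=O(1)$ states, and a union bound over the $O(n^{d})$ operations establishes correctness with high probability in $n$.

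The step I expect to be the main obstacle is the state-accounting argument of the first paragraph: establishing that one may union-bound over a polynomial number of sub-operations while spending only $O(\log{(1/\epsilon)})$ states. This rests on showing that the $\Theta(\log{n})$-length distributed timer of the Constant State subroutine keeps each selection safe with probability $1 - 1/\text{poly}(n)$ even when the pool being selected from is tiny (say of size two), which in turn requires verifying that the timer's length is governed by the total population $n$ rather than by the size of the currently active pool. A secondary and more mechanical obstacle is confirming that fixed-radix carry propagation faithfully simulates arithmetic on $\text{poly}(n)$-valued quantities using only $O(n)$-bounded counters while keeping its overhead polynomial, so that the final running-time bound is preserved.
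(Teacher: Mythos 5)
Your proposal follows the paper's proof essentially step for step: the paper likewise elects a coordinator using the conjunction (``double-safe'') combination of the \emph{Fixed Error} and \emph{Constant State} subroutines to obtain per-election error $\epsilon/n^c$ with $O(\log(1/\epsilon))$ states ($O(1)$ for the high-probability variant), stores counters in unary across the nodes with increment/decrement implemented as elections among the relevant subset and zero-tests as a single beep round, invokes Minsky's reduction, and union-bounds over the polynomially many elections to obtain the $O(n^d\log^2(n+1/\epsilon))$ bound. The only cosmetic differences are your partition-style counter encoding (the paper instead gives each node one bit per counter, letting each counter independently reach $\Theta(n)$) and your explicit radix/carry discussion, which the paper leaves inside the citation to Minsky; the ``main obstacle'' you flag is resolved exactly as you guess, since the \emph{Constant State} knockout timer is driven by all $n$ nodes regardless of how many are active.
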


\noindent We now highlight some important observations about the above result.
 {First,}  we should not expect to simulate a {\em more} powerful class of TM.
This follows from a configuration counting argument.
For $s=O(1)$, for example, the $n$ nodes in our model can collectively encode at most $O(n^s)$ unique configurations
(there is no explicit ordering of nodes, so a given configuration of our system is described by the number of nodes out of $n$
in each of the $s$ possible states). 
A TM with more than log space,  by contrast, might have many more possible configurations that need to be simulated. 
What is perhaps more surprising is that we can successfully simulate a logspace machine even though nodes do not
have enough states for unique ids or even to store a single pointer to the simulated TM's tape.
 In some sense, our algorithm is making
full use (asymptotically) of the available memory in our distributed system.

Second, notice that the size of this algorithm is independent of the network size.
The same number of states successfully simulates the TM even
as $n$, and therefore the potential length of the simulated TM computation,
grows to arbitrarily large values.
Third, this results establishes $\log{(1/\epsilon)}$ as a key state complexity threshold in the beeping model. 
If you have fewer than this many states, 
you cannot even safely solve basic symmetry breaking tasks (e.g., leader election).
Once you reach this threshold, however, suddenly you can solve a rich set of expressive
problems (e.g., anything solvable by a logspace TM).

Finally, we emphasize that we do not present this simulation as a practical algorithm for solving problems in limited communication scenarios
(simulating a TM typically adds many more layers of indirection than is necessary).
We instead use this result to identify the threshold beyond which beeping nodes can start to solve interesting problems.
Finding elegant solutions to individual problems in this class is a separate and useful endeavor.

Before proceeding to the proof details, we first
summarize the main ideas.
Our result depends on a TM simulation strategy that follows the  outline
originally identified in~\cite{AngluinADFP06},
where it was used to simulate a TM using a population protocol in the randomized interaction model.
 In more detail, we first simulate a simple counter machine with a constant number
 of counters that can take values of size $O(n)$. 
 We then apply a classical computability result due to Minsky~\cite{minsky1967} which
  shows how to simulate a logspace TM (with unary input tapes) using a counter machine of this type.
The counter machine simulation in the beeping model,
combined with Minsky's TM simulation in the counter model,
yields a TM simulation in the beeping model.
Though we follow the same outline as in~\cite{AngluinADFP06},
the details of our counter simulation of course differ as we are implementing
this simulation in the beeping model whereas \cite{AngluinADFP06} implements
their simulation in the population protocol model.
What our two approaches do share (along with many network simulations of TMs)
is the use of leaders to coordinate the simulation.

The core concept in our counter machine simulation is to elect a leader
to play the role of the simulation coordinator.
This coordinator can announce the counter operations that need to be simulated by the network.
We show for every operation required of a counter machine,
there is a way to simulate its operation with at most  a constant number of leader election instances.
To elect leaders, we use our universal algorithm combined with the {\em conjunction} of both our fast termination detection subroutines
(i.e., termination requires both to return true). We show that the error probability of this combination is bounded by $\epsilon/n^c$---allowing us
to safely solve leader election for the needed polynomial number of instances before the simulated logspace TM reaches its final decision.
We now tackle these elements in more detail.

\paragraph{Simulating a Counter Machine.}
The counter machine we simulate has access to a constant number of counters
that can hold values from $0$ to $O(n)$.
Control is captured by a finite state machine.
Each state transition can increment, decrement, or reset to $0$ any of the counter values.
The transition function can also integrate the outcome of a comparison operator
that compares the value stored in a given counter to $0$. 

Our simulation leverages leader election as a key subroutine.
In particular, we use the primitive that results when we combine the universal leader election
algorithm with the the conjunctive combination of {\em both} the {\em Fixed Error} and {\em Constant State}
termination detection subroutines.
That is, every time the universal algorithm calls a termination subroutine, it will now call both of these subroutines, 
one after the other, and then return the value $out_1 \wedge out_2$,
where $out_1$ and $out_2$ are the outcomes of the first and second subroutine called, respectively.
For the remainder of this discussion, we call this instantiation of the universal algorithm the {\em double-safe fast leader election} algorithm.
Because the back-to-back executions of these subroutines are independent, 
and they offer error bounds of $\epsilon$ and $n^{-c}$ (for some constant $c>1$ that grows with the available state), respectively,
we get the following claim about this algorithm:

\begin{claim}
The probability that a given call to the termination subroutine fails in the double-safe leader election algorithm
is no more than $\epsilon/n^c$.
\end{claim}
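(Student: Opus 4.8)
The plan is to exploit the conjunctive structure of the combined termination rule. Since the double-safe algorithm returns $out_1 \wedge out_2$, a ``failure'' of the combined routine---returning true in an invocation that begins with more than one active node---can occur only if \emph{both} the {\em Fixed Error} subroutine and the {\em Constant State} subroutine return true in that same invocation. I would therefore bound the combined failure probability by the product of the two individual failure probabilities, invoking the safety guarantees already established for each subroutine.

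First I would fix attention on a single invocation and let $k$ denote the number of active nodes at the moment the termination subroutine is called. This value is determined by the main knockout loop of the universal algorithm \emph{before} either subroutine executes, and it is identical for both subroutines, which run back-to-back on the same active set (neither subroutine modifies the global $active$ status tracked by the main loop). The failure event of interest is precisely ``returns true while $k \geq 2$.'' Conditioned on $k$, the outcome $out_1$ depends only on the fresh coin flips consumed by {\em Fixed Error}, and $out_2$ depends only on the disjoint, freshly drawn coin flips consumed by {\em Constant State}; hence the two outcomes are conditionally independent, and $\Pr[out_1 \wedge out_2 \mid k] = \Pr[out_1 \mid k]\cdot \Pr[out_2 \mid k]$.

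Next I would plug in the per-invocation bounds delivered by the two safety lemmas. Lemma~\ref{lem:safetyfast} shows that, for an invocation with $k \geq 2$, the {\em Fixed Error} subroutine returns true with probability at most $\epsilon/2 \leq \epsilon$, and Lemma~\ref{lem:safetyfaster} shows that, for such an invocation, the {\em Constant State} subroutine returns true with probability at most $1/n^c$, where the constant $c>1$ can be made as large as desired by increasing the fixed bound on $count$. Multiplying these two bounds via the independence established above yields $\Pr[out_1\wedge out_2 \mid k\geq 2] \leq \epsilon \cdot n^{-c} = \epsilon/n^c$, as claimed.

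I expect the only real subtlety to be the independence justification rather than any calculation. I would need to argue carefully that conditioning on $out_1 = \text{true}$ does not perturb the distribution of $out_2$---which holds because the two subroutines draw disjoint random bits and the only shared state (the $active$ and $ko$ flags, and in particular $k$) is fixed prior to both calls---and that ``more than one active node'' is the common conditioning event for both. A secondary point worth stating explicitly is that each safety lemma is phrased over the first $R$ invocations; here I would note that the failure probability of any single invocation is at most that of the union of the failure events over all $R$ invocations, hence bounded by the same quantity, which is exactly what the product argument consumes.
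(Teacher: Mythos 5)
Your proposal is correct and takes essentially the same approach as the paper: the paper likewise argues that the conjunctive output $out_1 \wedge out_2$ can erroneously return true only if both back-to-back, independent subroutines fail, and multiplies their respective error bounds $\epsilon$ and $n^{-c}$ to get $\epsilon/n^c$. Your write-up simply makes explicit the conditional-independence justification (disjoint fresh coin flips, fixed active set) and the extraction of per-invocation bounds from the two safety lemmas, which the paper leaves implicit.
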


Returning to our simulation description, the first step is to run the double-safe protocol to elect a leader to play
the role of {\em coordinator}.
This node is responsible for simulating the state transitions of the finite-state control of the counter machine.
It is also responsible for announcing to the other nodes (using predetermined, constant length beep patterns) which operation will be simulated next.

We make use of the states distributed among all $n$ nodes in the network to 
store the counter values in a distributed fashion.
To so so, we assume for each counter $c_i$, each node has a local bit labelled $c[i]$ in its state.
Our simulation will store counter values in unary using these bits.
That is, we represent $c_i = x$ at a given point in our simulation
by having exactly $x$ nodes in our beep model with their $c[i]$ bit set to $1$.
(To handle counter values larger than $n$, but still in $O(n)$, we can expand the size of these $c[i]$ local counters to larger constant sizes as needed).

Two of these counters (let us call them $c_1$ and $c_2$) are needed to run the TM simulation, and we assume are initialized to zero.
Accordingly, we assume all nodes begin with $c[1] = c[2] = 0$.
The simulation also assumes the values stored in unary on the input tape(s) of the simulated TM are also initially stored in counters.
We can capture this in a similar manner; e.g., if $c_3$ corresponds to an input tape storing value $x$ in unary, we
assume exactly $x$ nodes begin with $c[3]=1$.

We now describe each of the operations that may need to be simulated, and show
how the nodes can successfully simulate each of these operations using only $O(1)$ correct calls to double-safe leader election.
The coordinator, true to its name, coordinates these operation simulations.
That is, it announces the next operation to be simulated with a fixed beep pattern.
It then uses the results of operation simulation to advance its local copy of the counter machine control,
which determines which operation to simulate next.

\begin{itemize}

	\item {\em Increment.} To increment a counter $c_i$, the set of nodes (if any) with $c[i]=0$ run leader election.
	The winner sets its $c[i]$ bit to $1$. (If all nodes have $c[i]=1$ then no election is necessary as the counter does not grow beyond its maximum value.)
	
	\item {\em Decrement}. To decrement a counter $c_i$, 
	we follow the same strategy as the increment operation, except now nodes with $c[i]=1$ compete, and the leader resets $c[i]$ to $0$.
	(If all nodes have $c[i]=0$, then no election is necessary as the counter cannot reduce below $0$.)
	
	\item {\em Zero}. To zero a counter $c_i$,
	all nodes with $c[i]=1$ reset $c[i]$ to $0$.
	
	\item {\em Compare to Zero}. To compare a counter to $c_i$ to $0$,
	it is sufficient to assign a round for all nodes with $c[i]=1$ to beep.
	If the coordinator detects silent then it knows the stored counter value is $0$, otherwise it is greater than $0$.

\end{itemize}

\paragraph{The TM Simulation}
Though the counter machine described above is quite simple,
Minsky~\cite{minsky1967} shows it is sufficiently powerful to simulate
a logspace TM with a constant number of unary input tapes.
Minsky's simulation requires up to a polynomial number of steps
of the counter machine for each simulated step of the TM.\footnote{Minsky's simulation stores the working tape
of the simulated TM has a base-$b$ value (where $b$ is tape alphabet size) stored in unary in the counters. 
In particular, the value is split between two counters. Step simulations require multiplication and division operations,
which, to implement using the increment and decrement operations available in the counter machine,
can require steps linear in the counter sizes.}
Therefore, our TM simulation will require up to a polynomial number of successful leader election
calls per simulated TM step.
The total number of simulated TM steps can also be bounded by a polynomial,
as the machine has only logarithmic space and it is a deterministic decider.
Therefore, our simulation must correctly implement leader election a (larger) polynomial
number of times to correctly simulate the TM until its decision.

\paragraph{Analysis.}
We note that a sufficiently large constant number of states is enough for the coordinator to simulate the finite control 
of the counter machine.
We also note that an additional constant number of states
is enough for the nodes to store their constant-sized pieces of the constant number of counter values used in the simulation.

More interesting is the question of how many states are needed to ensure that the leader elections calls in the simulation are all correct.
To answer this question, let $n^a$, for some constant $a \geq 1$ dependent on the TM definition and Minsky simulation details,
be the maximum number of leader election calls our simulation might make.
We note that our double-safe algorithm uses two termination detection subroutines.
The first requires $s=O(\log{(1/\epsilon)})$ states to reduce the error probability to no more than $\epsilon$.
The second requires $s=O(1)$ states to reduce the error probability to $n^{-c}$, for some $c$ we can grow by increasing the constant in the state size.
If we fix $c \geq a$, we see the probability that a particular call to leader election fails
is no more than $\epsilon/n^a$.
A union bound over the no more than $n^a$ leader elections needed by the simulation tells
us that the probability at least one fails is less than $\epsilon$.

Finally, we turn our attention to time complexity.
Because each of the fast termination detection subroutines terminate in time $O(\log^2{(n+1/\epsilon)})$,
we get $O(n^a\log^2{(n+1/\epsilon)})$ as an expected time bound.
These results combine to establish Theorem~\ref{thm:tm}.




\bibliographystyle{plain}
\bibliography{bib}

\appendix

\end{document}